\newtheorem{theorem}{Theorem}[section]
\newtheorem{lemma}[theorem]{Lemma}
\newtheorem{definition}[theorem]{Definition}
\newtheorem{corollary}[theorem]{Corollary}
\newenvironment{proof}[1][Proof]{\noindent\textbf{#1.} }{\ \rule{0.5em}{0.5em}}
\newcommand{\R}{\mathbb{R}}
\newcommand{\N}{\mathbb{N}}
\newcommand{\E}{\mathbb{E}}
\newcommand{\F}{\mathscr{F}}
\begin{document}
\markboth{{\sc Gottschalk, Nizami, Schubert}}{{\sc Options in Markets with unknown dynamics}}

\title{{ \Large \sc Option Pricing in Markets with Unknown Stochastic Dynamics}}
\author{{\sc Hanno Gottschalk}, {\sc Elpida Nizami} and {\sc Marius Schubert} \\
{\small Fachgruppe f\"ur Mathematik und Informatik, Bergische Universit\"at Wuppertal, Germany}\\{\tt \small $\{$hanno.gottschalk,elpida.nizami,marius.schubert$\}$@uni-wuppertal.de}} 


\maketitle

{\abstract \noindent } We consider arbitrage free valuation of European options in Black-Scholes  and Merton markets, where the general structure of the market is known, however the specific parameters are not known. In order to reflect this subjective uncertainty of a market participant, we follow a Bayesian approach to option pricing. Here we use historic discrete or continuous observations of the market to set up posterior distributions for the future market. Given a subjective physical measure for the market dynamics, we derive the existence of arbitrage free pricing rules by constructing subjective option pricing measures. The non-uniqueness of such measures can be proven using the freedom of choice of prior distributions. The subjective market measure thus turns out to model an incomplete market. In addition, for the Black-Scholes market we prove that in the high frequency limit (or the long time limit) of observations, Bayesian option prices converge to the standard BS-Option price with the true volatility. In contrast to this, in the Merton market with normally distributed jumps Bayesian prices do not converge to standard Merton prices with the true parameters, as only a finite number of jump events can be observed in finite time. However, we prove that this convergence holds true in the limit of long observation times.   

\vspace{.3cm}

\noindent {\bf Key words:} Bayesian arbitrage free option pricing, Bayes statistics, Bayesian consistency \\
\noindent {\bf Mathematics Subject Classification (2010)} 91G20, 62F15

\section{Introduction}

We consider the thought experiment, where a market participant prices options based on a market model $S_t$ for the underlying asset where the market's general structure is known as, however the  model parameters are not known. The models considered here are the Black-Scholes \cite{BS,Mer1} (pure exponential diffusion) and Merton \cite{Mer2} (exponential jump diffusion) models.  The calibration of the parameters is based on discrete (low frequency) or continuous (high frequency) observations of $S_t$ in the time preceding interval $[-\tau,0]$, where $\tau>0$ is the observation time. So we deal with historical volatility for the Black Scholes model and historical jump frequency and height distribution for the Merton model. Furthermore, the market participant follows an Bayesian approach in order to express her or his uncertainty about the parameters of the underlying model $S_t$. Hence, the subjective market measure $P(A)=\int P_\theta f(\theta) d\theta$ for the market is a mixture of the parametric family of market measures with respect to the model parameters $\theta$ and with weights given by a posterior distribution $f(\theta)$.

The main results of this paper are the following: First, we prove that for the Black-Scholes and Merton markets arbitrage free pricing rules (or equivalent martingale measures) $Q=\int Q_\theta f(\theta) d\theta$ exist and can be obtained as a mixture from a parameter-dependent family of equivalent martingale measures $Q_\theta$.   As $f(\theta)$ in the definition of $Q$ does not necessarily have to be the same posterior distribution as in the definition of $P$, while $Q$ remains to be an equivalent martingale measure with respect to $P$ as long as both posterior distributions are equivalent, we conclude that subjective markets are incomplete even in the case where the corresponding market for a fixed set of the parameters is a complete market, as in the case of the Black-Scholes market. Although this result seems to be rather natural, a proof seems to be missing in the literature. 

Second, we prove that the option prices for European options obtained by the pricing measure $Q$ converge almost surely to the Black-Scholes prices in the limit of high frequency observations or long time observations at a given frequency. We give a complete proof for this result for normalizable and non normalizable prior distributions using a saddle point argument, which is a variant  of Bayesian consistency \cite{CR,Gho}. However, in a Merton market where $Q$ is obtained as a posterior mixture of, e.g.,  mean corrected equivalent martingale measures $Q_\theta$, the $Q$-prices do not converge in the high frequency limit. The result would be the same for any other construction of $Q_\theta$, e.g. by the Esscher transform \cite{CT} and we choose mean correction only for convenience. In the limit of long time observations, $Q$ prices however converge almost surely to the mean corrected prices with respect to $Q_{\theta_0}$, where $\theta_0$ is the 'true' set of parameters.  The reason that prices after a finite observation time $\infty>\tau>0$ remain different from the standard Merton prices lies in the fact that (almost surely) only a finite number of jumps can be observed in finite time. Consequently, the subjective uncertainty abaout the true distribution and frequency of the jumps does not go away after a finite observation of the market $S_t$.  This implies that Bayesian option pricing is somewhat inconsistent in the case of the Black-Scholes market since the outcome depends very much on the observation frequency, this is not the case for markets of jump-diffusion type, like the Merton market. The fact that in the long time asymptotics also the $Q$-prices converge to $Q_{\theta_0}$ prices is less relevant, since the statistical law of empirical market data typically changes significantly during a few years -- a time span, where only typically a hand full of major jump events is observable.

The Bayesian approach to option pricing has been considered before in various publications, see e.g. \cite{DSa,GHM,RHBF} for an review of the early literature. In \cite{GHM} results are obtained that are  close to our findings on the Black-Scholes market in Section 2. However, the approach is somewhat reversed as the subjective market measure is derived from the subjective pricing measure, while we proceed the other way round. A clear statement on the equivalent martingale property is missing, although the paper contains some observations that go in that direction. The paper \cite{DSa,HLM} are in a similar setting, however the focus is on numerics and applications and not on the underlying mathematical structure. The paper \cite{FGMM} applies option pricing in the Baesian Black-Scholes market to real maket data.  The paper however contains a proof of consistency in the Black-Scholes case. 

In \cite{FMW}, a stochastic volatility model is trated in the Bayesian framework, but the focus is on a filtering technique for the stochastic volatility. Also \cite{Kai} discusses stochastic volatility (Heston) models in the Bayesian framework.   In \cite{RS}, the Baesian risk neutral dynamics is considered in a time series framework,  with GARCH models being the main focus. Also, the work \cite{JP} follows a similar approach, but also contains applications to portfolio management. The more general case of jump-diffusions is not treated in any of these papers, see however \cite{FR} for a recent numerical study. As explained previously, the jump diffusion case is of independent conceptual interest, especially in the context of high frequency observations.

The paper is organized as follows: In Section 2  we introduce the subjective Black Scholes market and pricing measures an prove equivalence and the martingale property in Theorem \ref{theo:NoArbitrageBS}.  We also give a prove of convergence of $Q$-prices to the usual Black Scholes prices in the high frequency (and long observation time) limit of observations in Theorem \ref{theo:BayesConsistency} for the convenience of the reader, reproducing essentially prior findings from \cite{GHM}. We also provide a numerical convergence study which shows that the usual 20-200 day-to-day  estimates of historical volatility do have sufficiently Bayesian uncertainty left such that Bayesian prices still are significantly different from standard BS prices. This underlines the importance of intra day quotes for the eliminition of Bayesian uncertainty in the BS-case. 

   Section 3 deals with the subjective Merton market for high frequency (continuous) observations such that the BS-part of the Merton model is fixed from the observation of a small piece of the trajectory. This is however not true for the jump part. We construct the posterior distribution $Q$ from continuous observation of the market via a Grisanov-like theorem for compound Poisson processes \cite[Chapter 10.5]{CT}.   The convergence of the subjective Merton prices to the mean corrected Martingale measure $Q_{\theta_0}$ is proven for the limit of long observation time. Some technical details can be found in the appendix. While  this is quite similar to the BS-situation mathematically, the main economic difference lies in the fact that it is not possible to generate more information from a higher observational frequency as jump events remain to be sparse. This is also illustrated by a numerical example that reveals considerably higher Bayesian Merton prices than Merton prices without Bayesian uncertainty even after an observation time of two years. 

 In the final section we give our conclusions. In order to keep the paper self consistent, our formulation of the saddle point method for Bayesian consistency is given in Appendix A.

\section{The Black Scholes Market with Unknown Volatility}
\subsection{Some Fundamentals on the Black Scholes Market}
We first collect some well-known facts on the Black-Scholes (BS) model \cite{BS,Mer1}. We thus consider an asset with price $S_t$ given by an exponential Brownian motion
\begin{equation}
\label{eqa:BSmarket}
S_t=S_0e^ {X_t}\mbox{ with } X_t=\rho t+\sigma W_t,~~t\in[-\tau,T].
\end{equation}
Here $W_t$ is a standard Brownian motion, that is conditioned to zero at $t=0$ and runs backward in time for $-\tau\leq t\leq 0$. $\rho$ is some interest rate which is assumed to be known, e.g. as the libor interest rate. $\sigma>0$ is the volatility which either has to be calculated -- as it is implicitly contained in public option price date -- or has to be estimated statistically from historic date. Here we follow the latter approach. $\tau >0$ is the time in the past for which a market participant assumes that the market dynamics has not changed significantly. The present time is $t=0$. $T$ is the maturity time of some option that we are going to consider.

Let $(\Omega,(\mathscr{F}_t)_{t\in[-\tau,T]},P_\sigma)$ be a filtered probability space such that the usual conditions are fulfilled and $S_t$ and $X_t$ are realized as adapted processes. Let $(\F_t^+)_{t\in[0,T]}$ be a second filtration such that $\F_t^+\subseteq \F_t$ for $0\leq t \leq T$, $(S_t)_{t\in[0,T]}$ and $(S_t)_{t\in[0,T]}$ are adapted processes with respect to $(\F_t^ +)_{t\in[0,t]}$ and the increments of $X_t$ in the past, $X_s-X_t$, $-\tau\leq s<t\leq 0$ generate a sigma algebra, such that $\F_T^+$ is independent from it under $P_\sigma$.

  It is well known from the fundamental theorem of option pricing \cite{DS,Schach} or \cite[Proposition 9.2]{CT} that an arbitrage fee price for contingent claims with $\mathscr{F}_T^+$- measurable, non-negative pay off $H:\Omega\to\R_+$ at maturity time $T$ is given at the present time $t=0$ by
\begin{equation}
\label{eqa:MartingaleMeasure}
V_0(H,\sigma)=e^{-\rho T}\E_{Q_\sigma}[H],
\end{equation} 
where $Q_\sigma$ is a measure which is equivalent to $P_\sigma$ such that $\hat S_t=e^{-\rho t}S_t$, $t\in[0,T]$ is a (local) $Q_\sigma$ martingale with respect to $\mathscr{F}_{t\in[0,T]}$. Furthermore, the market is complete, if and only if $Q_\sigma$ is uniquely determined by the martingale condition \cite[Chapter 9.2]{CT}.

For the BS-market with volatility $\sigma$ the equivalent martingale measure can be constructed by  Grisanov's formula
\begin{equation}
\label{eqa:Grisanov}
Q_\sigma =L_TP_\sigma\mbox{ with } L_T=e^{-\frac{1}{2}\sigma W_T-\frac{1}{8}\sigma^ 2T}.
\end{equation} 
Furthermore, as the BS market is complete, $Q_\sigma$ is unique.

We consider the European call and put options with maturity $T>0$ and strike price $k>0$, $C(K,T)$ and $P(K,T)$, that are defined by the pay off function $(\pm (S_T-K))^+$. The well-known BS-formula then provides the fair prices for these options
\begin{align}
\begin{split}
\label{eqa:BSprices}
V_0(C(T,K)|\sigma)&=S_0\Phi(d_1(\sigma))-e^{-\rho T}K\Phi(d_2(\sigma))\\
V_0(P(T,K)|\sigma)&=e^{-\rho T}K\Phi(-d_2(\sigma))-S_0\Phi(-d_1(\sigma))
\end{split}
\end{align}
with $d_{1/2}(\sigma)=[\log\left(\frac{S_0}{K}\right)+(\rho\pm \frac{1}{2}\sigma^ 2)]/(\sigma\sqrt{T})$. It is immediate that the right hand side is non-negative, is bounded by $S_0$ and $K$, respectively and depends continuously on the volatility $\sigma$.

\subsection{Pricing with an unknown volatility} 
For the remainder of this section we make the assumption that a market participant wants to price European options at time $t=0$ on the basis of observations $S_{t_1},\ldots,S_{t_{n+1}}$ of (positive) market quota at observation times $-\tau=t_1<t_2<\cdots<t_{n+1}=0$. As a definite value for the volatility is not fixed by this finite set of observations, she/he follows a Bayesian approach using that under $P_\sigma$
\begin{equation}
X_j=\frac{\log\left(\frac{S_{t_{j+1}}}{S_{t_j}}\right) -\rho \Delta t_j}{\sqrt{\Delta t_j}}\sim N(0,\sigma^2),~~\Delta t_j=t_{j+1}-t_j,
\end{equation}
holds for $j=1,\ldots,n$, where $N(0,\sigma^2)$ stands for the normal distribution with zero mean and  variance $\sigma^2$. Let $\pi(\sigma^2)\geq 0$ be some prior function of $\sigma^2)$, which we assume to be continuous and bounded. Let $\hat \sigma^2_n=\frac{1}{n}\sum_{j=1}^nX_j^2>0$, then the well-known a posteriori distribution for the variance $\sigma^2$ for Gaussian data is well defined for $n\geq 2$
\begin{equation}
\label{eqa:postSigma}
f_n(\sigma^ 2)=f_n(\sigma^ 2|\pi)=\frac{\frac{1}{\sigma^ n}e^ {-\frac{n}{2}\left(\frac{\hat\sigma_n^ 2}{\sigma^ 2}\right)}\pi(\sigma^ 2)}{\int_{\mathbb{R}_+} \frac{1}{s^ n}e^ {-\frac{n}{2}\left(\frac{\hat\sigma_n^ 2}{s^ 2}\right)}\pi(s^ 2)\,ds^ 2}.
\end{equation}
Here we suppressed the dependence of $f_n(\sigma^2)$ of the actually observed values $S_{t_j}$ and of the prior $\pi(\sigma^2)$ for notational simplicity. Note that the denominator for the non informative prior $\pi(\sigma^2)= 1$ is proportional to the likelihood, given $\hat\sigma^2_n$.

From a Bayesian standpoint, the following definition is natural:
\begin{definition}[Subjective BS Market Model]
\label{def:subjBS}
Suppose that for $\sigma^2>0$, $P_\sigma$ is a family of measures on $(\Omega,(\mathscr{F}_t)_{t\in[-\tau,T]})$ such that $S_t$ under $P_\sigma$ is distributed as in \eqref{eqa:BSmarket} and fulfils the conditions given above.

 Furthermore, for $A\in\mathscr{F}_T^+$, $\sigma^ 2\mapsto P_\sigma(A)$ is Borel measurable in $\sigma^ 2$ on $\R_+$. Let $S_{t_1},\ldots,S_{t_{n+1}}$ be the observations available from the past and $f_n(\sigma^2)$ the a posteriori distribution associated with some prior $\pi(\sigma^2)$, see \eqref{eqa:postSigma}. Then
\begin{equation}
\label{eqa:subjMarketMeasure}
P_\pi(A)=\int_{\R_+} P_\sigma(A) f_n(\sigma^2)\, d\sigma^2,~~~A\in\mathscr{F}^+_T,
\end{equation}
defines a probability measure that we call the subjective market measure for the BS-market (given the observations $S_{t_1},\ldots,S_{t_{n+1}}$ of the past).

Furthermore, define the subjective BS pricing measure
\begin{equation}
\label{eqa:subjPricingMeasure}
Q_\pi(A)=\int_{\R_+} Q_\sigma(A) f_n(\sigma^2)\, d\sigma^2,~~~A\in\mathscr{F}^+_T.
\end{equation}
For the non informative prior $\pi(\sigma^ 2)=1$, we also write $P=P_1$ and $Q=Q_1$.
\end{definition} 
Nomalization $P_\pi(\Omega)=1$ ($Q_\pi(\Omega)=1$) follows from normalization of $P_\sigma$ ($Q_\sigma$) and $f_n(\sigma^2)$ and sigma-additivity is an easy consequence of the sigma additivity of $P_\sigma$ ($Q_\sigma$) and monotone convergence for the $d\sigma^2$-Lebesgue integral. 

Note that mesurability of $P_\sigma(A)$ ($Q_\sigma(A)$) in $\sigma$ can be verified with the aid of the following construction: Let $P_B$ be the canonical measure on the continuous function $(C(\R_+),\mathscr{B}(C(\R_+))$ endowed with the Borel sigma algebra. Let $f(\sigma^2)$ be some positive, measurable function on $(\R_+,\mathscr{B}(\R_+))$ and let $\varphi:C(\R_+)\times \R_+\to C(\R_+)$ given by $(\omega(\cdot),\sigma^2)\mapsto \omega(\cdot/\sigma)$. Then, the image maeasure of $P_B\otimes f(\sigma^2)d\sigma^2$ under the mapping $\varphi$ is a construction of $P$. The existence of measurable kernels $P_\sigma(A)$ for $B\in\mathscr{B}(C(\R_+))$ now follows from Fubini's theorem \cite{Hal} applied to the product measure $P_B\otimes f(\sigma^2)d\sigma^2$.

\begin{theorem}[Arbitrage Free Pricing for the Subjective BS Market] 
\label{theo:NoArbitrageBS}
Let $\pi(\sigma^2)$ and $\pi'(\sigma^2)$ be two functions on $\R_+$ such that $\pi(\sigma^2)d\sigma^2$ and $\pi'(\sigma^2)d\sigma^2$ are equivalent. Then $Q_{\pi'}$ is an equivalent martingale measure with respect to $P_\pi$.

Furthermore, the subjective BS-market defined by $P_\pi$ is incomplete.  

\end{theorem}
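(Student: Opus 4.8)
The plan is to establish the two assertions separately, using the mixture representations \eqref{eqa:subjMarketMeasure} and \eqref{eqa:subjPricingMeasure} together with the completeness of each fixed-$\sigma$ market.

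First I would prove equivalence $Q_{\pi'}\sim P_\pi$. Fix $A\in\mathscr{F}_T^+$. Since $\pi$ and $\pi'$ are equivalent, the posteriors $f_n(\sigma^2\mid\pi)$ and $f_n(\sigma^2\mid\pi')$ from \eqref{eqa:postSigma} are mutually absolutely continuous with strictly positive, finite Radon--Nikodym density on $\R_+$ (the likelihood factor $\sigma^{-n}e^{-n\hat\sigma_n^2/(2\sigma^2)}$ cancels up to the normalising constants). Hence it suffices to show $Q_\pi(A)=0\iff P_\pi(A)=0$. Now $P_\pi(A)=0$ forces $P_\sigma(A)=0$ for $\pi\,d\sigma^2$-a.e.\ $\sigma^2$, hence for Lebesgue-a.e.\ $\sigma^2$ on the support; by Grisanov's formula \eqref{eqa:Grisanov} the density $L_T$ is a.s.\ strictly positive, so $Q_\sigma\sim P_\sigma$ for each $\sigma$, giving $Q_\sigma(A)=0$ for a.e.\ $\sigma^2$, whence $Q_{\pi'}(A)=\int Q_\sigma(A)f_n(\sigma^2\mid\pi')\,d\sigma^2=0$; the reverse implication is symmetric. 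A slightly cleaner route, which I would actually write, is to produce an explicit density: on the event algebra $\mathscr{F}_T^+$, $dQ_{\pi'}/dP_\pi$ should be $\int L_T(\sigma)\,\mu(d\sigma^2\mid\mathscr{F}_T^+)$ for a suitable disintegration, but since $L_T$ is bounded below only in $L^0$, not uniformly, I expect the event-wise argument above to be the safer one to commit to paper.

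Second I would verify the martingale property of $\hat S_t=e^{-\rho t}S_t$, $t\in[0,T]$, under $Q_{\pi'}$ with respect to $(\mathscr{F}_t)_{t\in[0,T]}$. For each fixed $\sigma$, $\hat S_t$ is a $Q_\sigma$-martingale, so for $0\le s\le t\le T$ and $B\in\mathscr{F}_s$ one has $\E_{Q_\sigma}[\mathbf{1}_B\hat S_t]=\E_{Q_\sigma}[\mathbf{1}_B\hat S_s]$; integrating against $f_n(\sigma^2\mid\pi')\,d\sigma^2$ and interchanging the order of integration (Tonelli, as $\hat S_t\ge 0$) yields $\E_{Q_{\pi'}}[\mathbf{1}_B\hat S_t]=\E_{Q_{\pi'}}[\mathbf{1}_B\hat S_s]$, which is exactly the martingale identity; integrability $\E_{Q_{\pi'}}[\hat S_t]=S_0<\infty$ follows the same way. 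One technical point to flag: the conditioning-filtration bookkeeping between $\mathscr{F}_t$ and $\mathscr{F}_t^+$ from the setup — one must check that the past increments $X_s-X_t$ on $[-\tau,0]$ remain independent of $\mathscr{F}_T^+$ under each $Q_\sigma$ (true, since $L_T$ depends only on $W_T$, i.e.\ on the forward part), so that the posterior weights are legitimately $\mathscr{F}_0$-measurable constants from the viewpoint of the forward dynamics.

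Finally, incompleteness: by the fundamental theorem quoted after \eqref{eqa:MartingaleMeasure}, the $P_\pi$-market is complete iff the equivalent martingale measure is unique. But the first part shows that every $\pi'$ equivalent to $\pi$ yields an equivalent martingale measure $Q_{\pi'}$, and distinct choices give distinct measures: e.g.\ take $\pi'$ and $\pi''$ equivalent to $\pi$ but with different shapes, then for a European call $C(T,K)$ the price $e^{-\rho T}\E_{Q_{\pi'}}[H]=\int V_0(C(T,K)\mid\sigma)f_n(\sigma^2\mid\pi')\,d\sigma^2$ is a genuine average of the strictly $\sigma$-dependent BS price \eqref{eqa:BSprices} against different probability densities, hence takes different values — so $Q_{\pi'}\ne Q_{\pi''}$ as measures. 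The main obstacle I anticipate is not any single step but making the measure-theoretic interchange and the $\mathscr{F}_t$/$\mathscr{F}_t^+$ filtration hygiene rigorous simultaneously; once the independence of the backward increments from $\mathscr{F}_T^+$ is invoked, everything else reduces to Tonelli's theorem and positivity of the Grisanov density.
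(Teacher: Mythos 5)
Your proposal is correct and follows essentially the same route as the paper: null-set transfer through the fixed-$\sigma$ equivalences $Q_\sigma\sim P_\sigma$ combined with the equivalence of the two posteriors, the martingale property by integrating the fixed-$\sigma$ martingale identity against the posterior and applying Tonelli, and incompleteness via non-uniqueness of $Q_{\pi'}$ under the second fundamental theorem. Your added remark that distinct priors yield genuinely distinct call prices is a slightly more explicit justification of non-uniqueness than the paper's one-line assertion, but the argument is the same.
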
 

\begin{proof}
The second assertion is an easy consequence of the first and the equivalence between uniqueness of the martingale measure and market completeness by the second fundamental theorem of asset pricing \cite[Proposition 9.3]{CT}. Note that different choices of $\pi'(\sigma^2)$ lead to different measures $Q_{\pi'}$.

For the equivalence of $P_\pi$ and $Q_{\pi'}$ let $A\in \F_T^+$ be a $P_\pi$ null set. Then, for $f_n(\sigma^2|\pi)d\sigma^2$ almost all $\sigma^2$ we have $P_{\sigma^2}(A)=0$ since otherwise the $\sigma^2$ integral would be positive. By equivalence of $P_\sigma$ and $Q_\sigma$ this implies $Q_\sigma(A)=0$ holds $f_n(\sigma^2|\pi)d\sigma^2$ almost surely and thus $f_n(\sigma^2|\pi')d\sigma^2$ almost surely since these two measures on $(\R_+,\mathscr{B}(\R_+))$ are equivalent with Radon-Nikodyn derivative given up to a positive constant by $\frac{\pi(\sigma^2)}{\pi'(\sigma^2)}$. Now, $Q_{\pi'}(A)$ vanishes as an integral over an $F_n(\sigma^2|\pi')d\sigma^2$ almost surely vanishing function in $\sigma^2$. Interchanging the r\^ole of $P_\pi$ and $Q_{\pi'}$ in the above argument, we derived equivalence.

To show the martingale property of $\hat S_t$ under $Q_{\pi'}$, we choose $A\in\F^+_s$ and let $-\tau\leq s<t\leq T$. Then, by the fact that $\hat S_t$ has the martingale property under all $Q_\sigma$, we obtain
\begin{align}
\begin{split}
\E_{Q_{\pi'}}[1_A\E_{Q_{\pi'}}[\hat S_t|\F^+_s]]&=\E_{Q_{\pi'}}[1_A\hat S_t]\\
&=\int_{\R_+}\E_{Q_\sigma}[1_A\hat S_t]\,f_n(\sigma^ 2)\,d\sigma^ 2\\
&=\int_{\R_+}\E_{Q_\sigma}[1_A\E_{Q_\sigma}[\hat S_t|\F^+_s]]\,f_n(\sigma^ 2)\,d\sigma^ 2\\
&=\int_{\R_+}\E_{Q_\sigma}[1_A\hat S_s]\,f_n(\sigma^ 2)\,d\sigma^ 2=\E_{Q_{\pi'}}[1_A\hat S_s].
\end{split}
\end{align}
As $A\in \F^+_s$ is arbitrary and $\hat S_s$ and  $\E_{Q_{\pi'}}[\hat S_t|\F^+_s]$ are both $\F^+_s$-measurable, it follows that   $\hat S_s=\E_{Q_{\pi'}}[\hat S_t|\F^+_s]$ $Q_{\pi'}$-a.s., which is the martingale property.
\end{proof}

From the theorem and \eqref{eqa:BSprices} one now deduces:
 \begin{corollary}[Subjective BS Option Prices] The arbitrage free prices with respect to the martingale measure $Q_\pi$ are given by 
 \begin{align}
\begin{split}
\label{eqa:BSpricesSubjective}
V_0(C(T,K)|\pi,n)&=\int_{\R_+}[S_0\Phi(d_1(\sigma))-e^{-\rho T}K\Phi(d_2(\sigma))]\, f_n(\sigma^ 2|\pi)\, d\sigma^ 2,\\
V_0(P(T,K)|\pi,n)&=\int_{\R_+}[e^{-\rho T}K\Phi(-d_2(\sigma))-S_0\Phi(-d_1(\sigma))]\, f_n(\sigma^ 2|\pi)\, d\sigma^ 2,
\end{split}
\end{align}
where we again suppressed the dependence on the past observations. 
 \end{corollary}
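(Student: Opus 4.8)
The plan is to read the claim off the risk-neutral pricing rule \eqref{eqa:MartingaleMeasure} once $Q_\pi$ is known to be an equivalent martingale measure. Taking $\pi'=\pi$ in Theorem \ref{theo:NoArbitrageBS} shows that $Q_\pi$ is an equivalent martingale measure with respect to $P_\pi$, so by \eqref{eqa:MartingaleMeasure} the arbitrage free prices at $t=0$ of the European call and put are $V_0(C(T,K)|\pi,n)=e^{-\rho T}\E_{Q_\pi}[(S_T-K)^+]$ and $V_0(P(T,K)|\pi,n)=e^{-\rho T}\E_{Q_\pi}[(K-S_T)^+]$. It therefore remains to evaluate these two expectations.

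Next I would treat the $\sigma$-fibre. For each fixed $\sigma^2>0$ the triple $(\Omega,(\mathscr{F}_t)_{t\in[-\tau,T]},P_\sigma)$ is the ordinary complete Black-Scholes market with the unique equivalent martingale measure $Q_\sigma$ of \eqref{eqa:Grisanov}, so the classical Black-Scholes computation gives that $e^{-\rho T}\E_{Q_\sigma}[(S_T-K)^+]$ and $e^{-\rho T}\E_{Q_\sigma}[(K-S_T)^+]$ equal the right-hand sides of \eqref{eqa:BSprices}. In particular the maps $\sigma\mapsto\E_{Q_\sigma}[(S_T-K)^+]$ and $\sigma\mapsto\E_{Q_\sigma}[(K-S_T)^+]$ are continuous, hence Borel measurable, and, since $(S_T-K)^+\le S_T$ with $\E_{Q_\sigma}[S_T]=e^{\rho T}S_0$ and $(K-S_T)^+\le K$, they are bounded by $e^{\rho T}S_0$ and by $K$, respectively.

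Finally I would unfold the mixture. By the definition \eqref{eqa:subjPricingMeasure} of $Q_\pi$ together with the measurable-kernel construction sketched after Definition \ref{def:subjBS}, any non-negative $\mathscr{F}_T^+$-measurable payoff $H$ satisfies $\E_{Q_\pi}[H]=\int_{\R_+}\E_{Q_\sigma}[H]\,f_n(\sigma^2|\pi)\,d\sigma^2$; for $H=(S_T-K)^+$ and $H=(K-S_T)^+$ this is an instance of Tonelli's theorem and is unconditionally valid since the integrands are non-negative and bounded by the previous step while $f_n(\sigma^2|\pi)$ is a probability density. Multiplying through by $e^{-\rho T}$ and substituting the fibrewise Black-Scholes formulas yields exactly \eqref{eqa:BSpricesSubjective}. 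The argument is essentially bookkeeping; the only point deserving a word of care is this Tonelli interchange of $\E_{Q_\sigma}[\cdot]$ with the $d\sigma^2$-integration, and that is covered by the boundedness of the fibrewise Black-Scholes prices noted above.
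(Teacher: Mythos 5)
Your proposal is correct and follows exactly the route the paper intends: the paper simply deduces the corollary from Theorem \ref{theo:NoArbitrageBS} and the fibrewise Black--Scholes formula \eqref{eqa:BSprices}, and your argument fills in precisely those steps (risk-neutral pricing under the EMM $Q_\pi$, the mixture/kernel representation of $\E_{Q_\pi}$, and the Tonelli interchange justified by boundedness). No gaps.
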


\subsection{The limit of high frequency or long time observations}

Here we consider the limit when the number of observations in the past goes to infinity and the market dynamics follows \eqref{eqa:BSmarket} for some fixed $\sigma_0>0$, which is however unknown to a market participant. Let $P_\sigma$ be the associated market measure. The limit $n\to\infty$ of the number of observations going to infinity can be realized either by letting $\tau\to\infty$ and keeping the frequency of observations fixed, or by increasing the frequency of observations keeping $\tau$ fixed. Technically, this problem falls into the field of Bayesian consistency, see e.g. \cite{CR,Gho}. We prove:

\begin{theorem}[Convergence of Option Prices to Standard BS Prices]
\label{theo:BayesConsistency}
In the limit when the number of past observations $n$ goes to infinity, the subjective BS-prices for European options converge to the BS-prices with volatility $\sigma_0$, provided $\pi(\sigma_0^2)>0$. We have
 \begin{align}
\begin{split}
\label{eqa:BSpricesConvergence}
V_0(C(T,K)|\pi,n)&\longrightarrow V_0(C(T,K)|\sigma_0)\\
V_0(P(T,K)|\pi,n)&\longrightarrow V_0(P(T,K)|\sigma_0)
\end{split}
\mbox{ , as }n\to\infty,
\end{align}
where the convergence takes place $P_{\sigma_0}$-almost surely. 
\end{theorem}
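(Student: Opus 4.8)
The plan is to reduce the statement to a posterior-concentration (Bayesian consistency) claim and then to establish that claim by a Laplace/saddle-point estimate resting on the strong law of large numbers, exactly in the form of the saddle-point lemma of Appendix~A.

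\emph{Step 1: reduction to concentration of the posterior.} By \eqref{eqa:BSpricesSubjective} the subjective call price is $\int_{\R_+}g_C(\sigma)\,f_n(\sigma^2|\pi)\,d\sigma^2$ with $g_C(\sigma)=S_0\Phi(d_1(\sigma))-e^{-\rho T}K\Phi(d_2(\sigma))$, and analogously for the put with an integrand $g_P$; as noted after \eqref{eqa:BSprices}, $g_C$ and $g_P$ are continuous and bounded on $\R_+$. Hence it is enough to show that, $P_{\sigma_0}$-a.s., the posterior measures $f_n(\sigma^2|\pi)\,d\sigma^2$ converge weakly to the Dirac mass $\delta_{\sigma_0^2}$ as $n\to\infty$: for a bounded continuous $g$ and any $\epsilon>0$,
\[
\Bigl|\,\int_{\R_+}g(\sigma)\,f_n(\sigma^2)\,d\sigma^2-g(\sigma_0)\,\Bigr|\;\le\;\sup_{|\sigma^2-\sigma_0^2|\le\epsilon}|g(\sigma)-g(\sigma_0)|\;+\;2\|g\|_\infty\,m_n(\epsilon),\qquad m_n(\epsilon):=\int_{\{|\sigma^2-\sigma_0^2|>\epsilon\}}f_n(\sigma^2)\,d\sigma^2,
\]
so letting $n\to\infty$, then $\epsilon\to0$, and using continuity of $g$ at $\sigma_0$ gives the theorem. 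Thus it remains to prove $m_n(\epsilon)\to0$ $P_{\sigma_0}$-a.s.\ for every fixed $\epsilon>0$.

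\emph{Step 2: the law-of-large-numbers input.} Under $P_{\sigma_0}$ one has $X_j=\sigma_0(W_{t_{j+1}}-W_{t_j})/\sqrt{\Delta t_j}$, so the $X_j$ are independent and $N(0,\sigma_0^2)$-distributed. If $n\to\infty$ by letting $\tau\to\infty$ at fixed observation frequency, this is an i.i.d.\ sequence and the strong law gives $\hat\sigma_n^2=\frac1n\sum_{j=1}^nX_j^2\to\sigma_0^2$ $P_{\sigma_0}$-a.s.; if $\tau$ is fixed and the grid is refined, $\hat\sigma_n^2$ equals $\tau^{-1}$ times the realized variance of $X$ on $[-\tau,0]$, which converges $P_{\sigma_0}$-a.s.\ along refining grids to $\tau^{-1}\langle X\rangle_{[-\tau,0]}=\sigma_0^2$. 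In either case $\hat\sigma_n^2\to\sigma_0^2$ $P_{\sigma_0}$-a.s.; fix such a sample path, so in particular $0<\inf_n\hat\sigma_n^2\le\sup_n\hat\sigma_n^2<\infty$.

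\emph{Step 3: the saddle-point estimate.} Write the Gaussian likelihood factor in \eqref{eqa:postSigma} as $\sigma^{-n}e^{-\frac n2\hat\sigma_n^2/\sigma^2}=e^{-\frac n2\phi_n(\sigma^2)}$ with $\phi_n(x)=\log x+\hat\sigma_n^2/x$, so that $m_n(\epsilon)=\bigl(\int_{\{|x-\sigma_0^2|>\epsilon\}}e^{-\frac n2\phi_n(x)}\pi(x)\,dx\bigr)\big/\bigl(\int_{\R_+}e^{-\frac n2\phi_n(x)}\pi(x)\,dx\bigr)$. The map $\phi_n$ has a unique minimum at $x=\hat\sigma_n^2$ with value $\log\hat\sigma_n^2+1$, and $\phi_n(x)\to+\infty$ as $x\downarrow0$ or $x\uparrow\infty$ (the latter uniformly for large $n$). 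For the denominator, Laplace's method on a ball of radius $\sim n^{-1/2}$ about $\hat\sigma_n^2$ — where $\pi\ge\frac12\pi(\sigma_0^2)>0$ for large $n$, using continuity of $\pi$ and $\pi(\sigma_0^2)>0$, and where $\phi_n$ exceeds its minimum by $O(1/n)$ — yields a lower bound $c\,n^{-1/2}e^{-\frac n2\phi_n(\hat\sigma_n^2)}$. For the numerator, the facts that $\hat\sigma_n^2\to\sigma_0^2$, that $\phi_n\to\phi_\infty$ with $\phi_\infty(x)=\log x+\sigma_0^2/x$ uniformly on compacts, and that $\phi_n$ blows up at $0$ and $\infty$, give a constant $\eta=\eta(\epsilon)>0$ with $\phi_n(x)\ge\phi_n(\hat\sigma_n^2)+\eta$ on $\{|x-\sigma_0^2|>\epsilon\}$ for all large $n$; factoring out $e^{-\frac{n_0}{2}\phi_n}$ for a fixed $n_0$ large enough that $\int_{\R_+}e^{-\frac{n_0}{2}\phi_n(x)}\pi(x)\,dx$ is finite and bounded uniformly in $n$ — which holds for bounded $\pi$, normalizable or not, since this integrand decays like $x^{-n_0/2}$ at $\infty$ and stays bounded near $0$ because $\hat\sigma_n^2$ is bounded away from $0$ — gives a numerator bound $C\,e^{-\frac n2(\phi_n(\hat\sigma_n^2)+\eta/2)}$. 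Combining, $m_n(\epsilon)\le\frac Cc\,n^{1/2}e^{-n\eta/4}\to0$. This is precisely the saddle-point consistency lemma of Appendix~A applied with rate function $\phi_\infty$, prior $\pi$ and minimizer $\sigma_0^2$.

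\emph{Main obstacle.} The delicate point is the uniformity of the two Laplace bounds in the random minimizer $\hat\sigma_n^2$ along a.s.-convergent sample paths, together with the integrability control of numerator and denominator for non-normalizable priors, which forces the estimates to be stated only for $n$ past a threshold and to exploit the $\sigma^{-n}$ decay of the likelihood; Steps~1 and~2, and the two ways of realizing the limit $n\to\infty$, are routine by comparison.
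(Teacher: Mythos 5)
Your proposal is correct and follows essentially the same route as the paper: a Laplace/saddle-point estimate driven by the strong law $\hat\sigma_n^2\to\sigma_0^2$, with the non-normalizable prior handled by peeling off a fixed power of the likelihood's $\sigma^{-n}$ decay — which is exactly the role of the auxiliary function $a(\theta)$ in the paper's Lemma~\ref{lem:BayesConsistBoundedPrior}. The only differences are presentational (you restate the appendix lemma as weak convergence of the posterior to $\delta_{\sigma_0^2}$ and re-prove it inline, and you are slightly more explicit than the paper about the high-frequency versus long-time realizations of $n\to\infty$); no gap.
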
 
\begin{proof}
Note that we can write the density $f_n(\sigma^2)$ in the form 
\begin{equation}
f_n(\sigma^2)=\frac{e^{-n h_n(\sigma^2)}\pi(\sigma^2)}{\int_{\R_+}e^{-n  h_n(\sigma^2)}\pi(\sigma^2) d\sigma^2} \mbox{ with } h_n(\sigma^2)=\frac{1}{2}\left(\frac{\hat\sigma^2_n}{\sigma^2}+\log(\sigma^2)\right).
\end{equation} 
We set $h(\sigma^2)=\frac{1}{2}\left(\frac{\hat\sigma^2_0}{\sigma^2}+\log(\sigma^2)\right)$ which has a unique minimum in $\sigma^2_0$ with value $\beta_0=\frac{1}{2}(1+\log(\sigma_0^2))$. We therefore identify \eqref{eqa:BSpricesConvergence} as a saddle point problem in the sense of Appendix \ref{app:A} with $g(\sigma^2)$ given by the expression in the brackets $[\ldots]$ in \eqref{eqa:BSpricesSubjective}. As remarked earlier, this $g(\sigma^2)$ fulfils the conditions of Lemma \ref{lem:BayesConsistBoundedPrior}.

As we wish  to apply Lemma \ref{lem:BayesConsistBoundedPrior} with $\Theta=\R_+$ and $\theta=\sigma^2$, we have to verify the remaining conditions. 
By the law of large numbers, $\hat\sigma_n^2\to\sigma_0^2$ $P_{\sigma_0}$ almost surely, it is easily seen that $h_n(\theta)\to h(\theta)$ uniformly of compact sets in $\R_+$ holds $P_{\sigma_0}$- almost surely.

Next we choose the function $a(\sigma^2)=1_{\{\sigma^2>1\}}2\log(\sigma^2)$.  $e^{-a(\theta)}$ is bounded by $1$ and decays like $\frac{1}{\sigma^4}$ for large $\sigma^2$ and thus is integrable with respect to $d\sigma^ 2$. Let us consider the functions $\tilde h_n(\sigma^2)$ for $n\geq 5$
\begin{equation}
\label{eqa:EstH}
\tilde h_n(\sigma^2)=\frac{1}{2}\left(\frac{\hat\sigma^2_n}{\sigma^2}+\log(\sigma^2)-\frac{1_{\{\sigma^2>1\}}4\log(\sigma^2)}{n}\right)\geq \frac{1}{2}\left(\frac{\underline{\sigma}^2}{\sigma^2}+\left(1-1_{\{\sigma^2>1\}}\frac{4}{5}\right)\log(\sigma^2)\right),
\end{equation}
with $\sigma^2_0\geq\underline{\sigma}^2=\inf_{n\geq 5}\hat \sigma^2_n>0$ $P_\sigma$-a.s.. The positivity of $\underline{\sigma}^2$ follow from $\hat\sigma_n^2>0$  and $\hat \sigma_n^2\to\sigma_0^2>0$ ($P_{\sigma_0}$ a.s.) We now set $n_0=5$, $\gamma=1$ and we construct the environment $U(\sigma_0^2)=(l_-,l_+)$ such that $l_-<\min(1,\sigma_0^2)$ is sufficiently small such that $\frac{1}{2}\left(  \frac{\underline{\sigma}_5^2}{\sigma^2}+\log(\sigma^2)\right)>\beta_0+1$ and $l_+>\max(1,\sigma_0^2)$ sufficiently large such that $\frac{1}{5}\log(l_+)\geq \beta_0+1$. It is an easy consequence of \eqref{eqa:EstH} that $\tilde h_n(\sigma^ 2)\geq \beta_0+1$ for $\sigma^ 2 \in \R_+\setminus U(\theta_0)$.  As this was the last condition from Lemma \ref{lem:BayesConsistBoundedPrior}, the statement in \eqref{eqa:BSpricesConvergence} follows. 
\end{proof}

\subsection{A numerical example  for the BS market}
We provide a numerical example for the dynamics of the subjective BS price with non informative prior $\pi(\sigma^2)=1$ according to \eqref{eqa:BSpricesSubjective}. The initial value of the fictitious asset is fixed to $S_0=100$, the annual volatility is set to $\sigma_0=15.8\%$, and the drift is $\rho=0.002$ per year. Strike prices $K$ for the European call with 3 month maturity are computed from $K=80$ to $K=134$ as a function of the number of observations $N$ ranging from 2 to $30$ or $150$, respectively. Observations of a realization of the BS market are simulated.

The simulation is carried out using \texttt{R 3.3.1}, the integrals in \eqref{eqa:BSpricesSubjective} are carried out using a 1-d adaptive numerical quadrature implemented in the \texttt{R} function \texttt{integrate}.

\begin{figure}[t]
\centerline{\includegraphics[scale=.3]{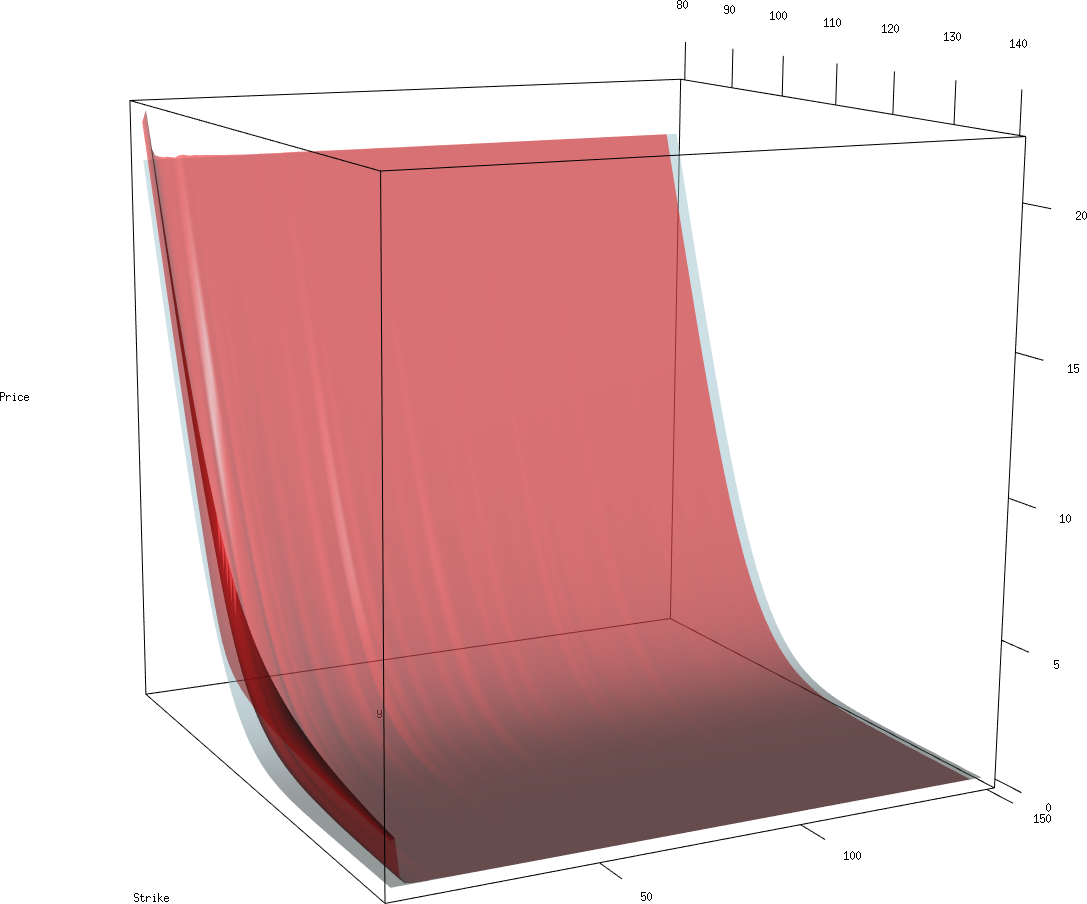}}
\caption{\label{fig:BS} Pricing surface (z-axis, to the top) for the subjective BS price for an European call with 3 month maturity as a function of the strike $K$ (y-axis, to the front) and Number of observations (x-axis, to the right). The lightgray surface gives the BS reference price for the same strike price.}
\end{figure}

Figure \ref{fig:BS} displays the price surface of the European call for fixed maturity time and varying strike and number of observations. The deviation in price from the standard BS price is already quite low after 30 observations, which is a usual number of observations  for a short term close-to-close volatility estimator, confer Figure \ref{fig:BSCONV}.   For 150 observations, which is close to the number of observations commonly used in a a long term close-to-close volatility estimator, the difference in price is less than 5\% in the given scenario, however for a short term 20 day volatility estimator it is around 25\%. This confirms the relevance of the convergence analysis in Theorem \ref{theo:BayesConsistency} with respect to high frequency observations. But  the usual day-to-day estimations  of historic volatility are not sufficiently 'high frequency' to neglect the difference in price caused by the measurement error of this quantity. This of course confirms previous studies on Bayesian option pricing in a time series context. Note however that the number of observations can, at least theoretically, be increased arbitrarily during a single day using intra day data.

\begin{figure}[t]
\centerline{\includegraphics[scale=.6]{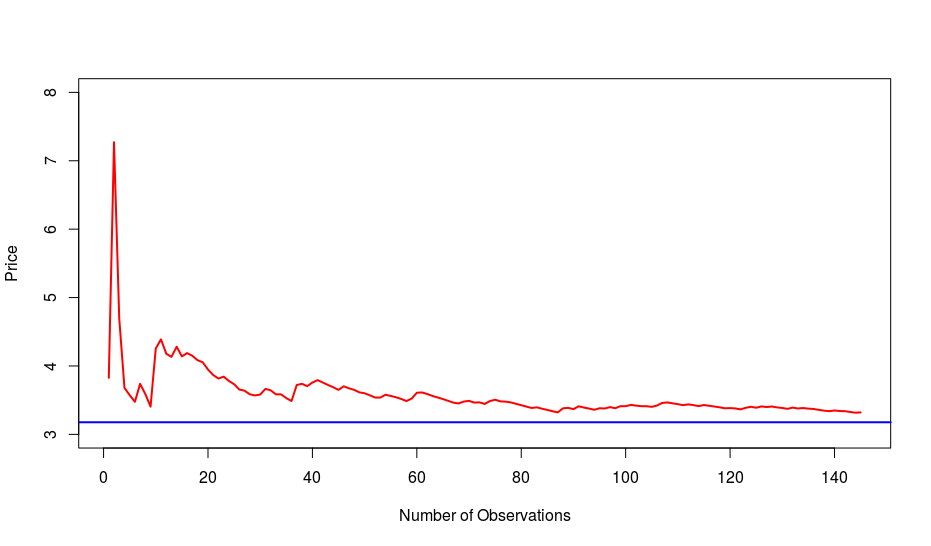}}
\caption{\label{fig:BSCONV} Convergence of the subjective BS price (red) to the actual BS price (blue) as a function of the number of observations. The strike is $K=S_0=100$ EUR.}
\end{figure}

\section{The Merton Market with Unknown Jump Distribution}

\subsection{Some fundamentals on the Merton market}
In this section we consider the Merton market \cite{Mer2} as a simple example of a market of exponential L\'evy type. We thus consider a market dynamics where a jump term of compound Poisson type is added:
\begin{equation}
\label{eqa:Merton}
S_t=S_0e^{X_t} \mbox{ with }X_t=\rho t+\sigma W_t+\sum_{j=\pm 1}^ {N_t}Y_j.
\end{equation}
Here $N_t$ is a Poisson process with intensity $\lambda$ and $Y_j$ are i.i.d.\ random variables. We assume $N_0=0$ and that $N_t$ takes non positive integer values on $[-\tau,0]$, in which case the summation starts with $-1$ and goes downward.  All components of \eqref{eqa:Merton} are independent from each other.

 Let $\nu_1$ denote the probability  measure on $(\R,{\cal B}(\R))$ such that $Y_j\sim \nu_1$. We assume $\nu_1(\{0\})=0$ and that $\mathscr{L}_{\nu_1}(\alpha)=\int_\R e^{y\alpha}d\nu_1(y)<\infty$ for all $\alpha\in[0,1]$. The L\'evy measure associated with \eqref{eqa:Merton} is $\nu=\lambda\nu_1$. To obtain  the Merton model, we set $\nu_1=N(m,\delta^2)$ where $m\in\R$ and $\delta^2>0$. Let $P_\theta$, $\theta=(\lambda,\delta^2,m)$.  be a measure on $(\Omega,(\F^+_t)_{t\in[-\tau,T]})$ such that $S_t$ and $X_t$ are adapted processes with the given distribution. Here we omitted the parameters $\rho,\sigma$ from $\theta$, as they are either given by public data or can (in the idealized world set by the model) be determined without estimation error by continuous ('high frequency') observations, respectively.

Let us next proceed to option pricing in the Merton model. Here two options are frequently chosen, mean correction and Esscher transformation \cite{CT,Sato} and \cite{Iac} . However, there are infinitely many further options to construct an equivalent martingale measure $Q_\theta$. Here we choose mean correction, for simplicity. We set
\begin{equation}
\label{eqa:DriftCorrectionMerton}
\mu_\theta=-\frac{1}{2}\sigma^ 2-\lambda \left(e^ {\frac{1}{2}\delta^ 2+m}-1\right)
\end{equation}
and we obtain a martingale measure applying Grisanov's formula to the Gaussian part of the market measure $P_\theta$
\begin{equation}
\label{eqa:MertonMartingaleMeasure}
Q_\theta=L_TP_\theta \mbox{ with }L_T=e^ {-\frac{\mu_\theta}{\sigma} W_t-\frac{1}{2}\left(\frac{\mu_\theta}{\sigma}\right)^ 2}.
\end{equation}
Using $Q_\theta$ for option pricing, we obtain the following expression for the European call and put \cite[10.1 Merton's approach]{CT}
\begin{align}
\begin{split}
\label{eqa:MertonMCprices}
V_0^{\mathrm{MC}}(C(T,K)|\theta)&=e^{-\lambda T}\sum_{n=0}^\infty\frac{(\lambda T)^n}{n!} V_0\left(C(K-nm,T)|\sqrt{\sigma^2+\frac{n}{T}\delta^2}\right)\\
V_0^ {\mathrm{MC}}(P(T,K)|\theta)&=e^{-\lambda T}\sum_{n=0}^\infty\frac{(\lambda T)^n}{n!} V_0\left(P(K-nm,T)|\sqrt{\sigma^2+\frac{n}{T}\delta^2}\right)
\end{split} 
\end{align}
The mean correction (MC) Merton prices are again non negative and bounded by $S_0$ and $K$, respectively. Furthermore, for $\theta\in\Theta=\R_+\times\R_+\times \R$ they continuously depend on the parameter $\theta=(\lambda,\delta^ 2,m)$. This is easily seen using the continuity and uniform boundedness of each expression $V_0(\ldots)$ on the right hand side of \eqref{eqa:MertonMCprices} and applying the theorem of dominated convergence  to the sum. Note that local bounds for $\lambda$ can be used to construct a dominating sequence.

\subsection{Pricing with an unknown jump intensity and jump distribution}
We now consider a market participant who believes that the Merton model \eqref{eqa:Merton} is structurally correct. Furthermore, she or he observed the asset quota $S_t$ for $t\in[-\tau,0]$ continuously. We assume that the path $(S_t)_{t\in[-\tau,0]}$ is generated by the model \eqref{eqa:Merton} with parameters $\rho,\sigma_0$ and $\theta_0=(\lambda_0, \delta_0^ 2,m_0)$, where $\theta_0$ is unknown. 

Let $\nu_{\theta}$ be the Levy measure associated with the parameters $\theta\in\Theta$ and let $\nu_{\theta,1}$  be the normalized L\'evy measure. We recall the Grisanov formula for compound Poisson processes which can e.g. be found in \cite[Chapter 5.4.3]{App} or \cite[Chapter 10]{CT}. We define define the measure $P_\theta$ on the sigma algebra $\F_0$ containing the information in the time interval $[-\tau,0]$
\begin{equation}
\label{eqa:GrisanovCP}
P_\theta = L_\tau(\theta|\theta_0)P_{\theta_0} \mbox{ with } L(\theta|\theta_0)=\exp \left\{\tau(\lambda_0-\lambda)+\sum_{-\tau\leq s\leq 0}\log\left(\frac{d\nu_\theta}{d\nu_{\theta_0}}(\Delta X_s)\right)\right\},
\end{equation}
where we use the convention  $\log\left(\frac{d\nu_\theta}{d\nu_{\theta_0}}(0)\right)=0$ and $\Delta X_s=X_s-X_{s-}$ is the jump height observed at time $s$. It is then well known, that $(X_t)_{t\in[-\tau,0]}$ follows the dynamic \eqref{eqa:Merton} with $\theta=(\lambda,\delta^ 2,m)$. 

As one usually does in the statistics of continuous processes, we interpret $L(\theta|\theta_0)$ as the likelihood of $\theta$ with respect to some fixed background measure $P_{\theta_0}$. As the dependency of $\theta_0$ drops out in maximum likelihood estimates and in the Bayesian formalism, as long as $P_\theta$ is absolutely continuous with respect to $P_{\theta_0}$, we can without loss of generality choose the true parameter set $\theta_0$ for the reference measure, even though $\theta_0$ is not known.

Let $\pi(\theta)$ be some continuous, bounded prior on $\Theta$.   The a posteriori density is then defined as
\begin{equation}
\label{eqa:PostLevy}
f_\tau(\theta)=f_\tau(\theta|\pi)=\frac{L(\theta|\theta_0)\pi(\theta)}{\int_\Theta L(\xi|\theta_0)\pi(\xi)\, d\xi}.
\end{equation}
Here again the dependency on the observed path $(S_t)_{t\in[-\tau,0]}$ is suppressed. The following lemma gives a more explicit formula for the a posteriori distribution:
\begin{lemma}[A Posteriori Distribution for the Merton Model] 
\label{lem:PostMerton}
Let $Y_1=\Delta X_{s_1},\linebreak\ldots, Y_{N}=\Delta X_{s_1}$ where $-\tau\leq s_1<s_2<\ldots< s_N\leq 0$ are the observed jump heights and $N=-N_{-\tau}$ is the observed number of jumps from $t=-\tau$ up to time $t=0$. Let $\hat\delta_N^2=\frac{1}{n}\sum_{j=1}^N(Y_j-\hat m_N)^2$ with $\hat m_N=\frac{1}{N}\sum_{j=1}^NY_j$. Furthermore let $\hat \lambda=\frac{N}{\tau}$. Then, if $N\geq 2$,
\begin{equation}
\label{eqa:PostMerton}
f_\tau(\theta)=\frac{e^{-N\frac{\lambda}{\hat \lambda_N}}\lambda^N \frac{1}{\delta^N}e^{-\frac{N}{2}\left(\frac{\hat\delta_N^2}{\delta^2}+\left(\frac{\hat m_N-m}{\delta}\right)^2\right)}\pi(\lambda,\delta^2,m)}{\int_{\R_+^2\times \R} e^{-N\frac{\bar\lambda}{\hat \lambda_N}}\bar{\lambda}^N \frac{1}{\bar{\delta}^N}e^{-\frac{N}{2}\left(\frac{\hat\delta_N^2}{\bar{\delta}^2}+\left(\frac{\hat m_N-\bar m}{\bar \delta}\right)^2\right)}\pi(\bar \lambda,\bar{\delta}^2\bar ,m)\,d\bar \lambda d\bar{\delta}^2 d\bar m}.
\end{equation}
\end{lemma}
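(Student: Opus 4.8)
The plan is to substitute the explicit Merton data into the Girsanov density $L(\theta\mid\theta_0)$ of \eqref{eqa:GrisanovCP} and simplify until the $\theta$-dependent part is isolated. Since $\nu_\theta=\lambda\,\nu_{\theta,1}$ with $\nu_{\theta,1}=N(m,\delta^2)$, and two nondegenerate Gaussians are mutually absolutely continuous while $\lambda,\lambda_0>0$, the Radon--Nikodym derivative factorises as
\[
\frac{d\nu_\theta}{d\nu_{\theta_0}}(y)=\frac{\lambda}{\lambda_0}\,\frac{\delta_0}{\delta}\,\exp\!\Big\{-\frac{(y-m)^2}{2\delta^2}+\frac{(y-m_0)^2}{2\delta_0^2}\Big\}.
\]
With the convention $\log\frac{d\nu_\theta}{d\nu_{\theta_0}}(0)=0$, the sum in \eqref{eqa:GrisanovCP} runs over exactly the $N$ observed jump heights $Y_1,\dots,Y_N$, so taking logarithms and summing gives
\[
\sum_{-\tau\le s\le 0}\log\frac{d\nu_\theta}{d\nu_{\theta_0}}(\Delta X_s)=N\log\frac{\lambda}{\lambda_0}+N\log\frac{\delta_0}{\delta}-\frac{1}{2\delta^2}\sum_{j=1}^N(Y_j-m)^2+\frac{1}{2\delta_0^2}\sum_{j=1}^N(Y_j-m_0)^2 .
\]

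Next I would bring in the empirical statistics. The identity $\sum_{j=1}^N(Y_j-m)^2=N\hat\delta_N^2+N(\hat m_N-m)^2$ turns the $(\delta,m)$-dependent part into $-\tfrac{N}{2}\big(\tfrac{\hat\delta_N^2}{\delta^2}+(\tfrac{\hat m_N-m}{\delta})^2\big)$, and $\hat\lambda_N=N/\tau$ yields $\tau(\lambda_0-\lambda)=\tau\lambda_0-N\tfrac{\lambda}{\hat\lambda_N}$. Collecting terms, $L(\theta\mid\theta_0)$ is the product of a prefactor depending only on $\theta_0$ and the observed path, namely $\exp\{\tau\lambda_0\}\,\lambda_0^{-N}\,\delta_0^{N}\,\exp\{\tfrac{1}{2\delta_0^2}\sum_j(Y_j-m_0)^2\}$, and the $\theta$-dependent factor $e^{-N\lambda/\hat\lambda_N}\,\lambda^N\,\delta^{-N}\,e^{-\frac{N}{2}(\hat\delta_N^2/\delta^2+((\hat m_N-m)/\delta)^2)}$. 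Inserting this into the definition \eqref{eqa:PostLevy} of $f_\tau(\theta)$, the $\theta_0$- and data-dependent prefactor is common to numerator and denominator and cancels, which is precisely \eqref{eqa:PostMerton}.

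Two points remain. First, $\hat\delta_N^2>0$ $P_{\theta_0}$-a.s.\ as soon as $N\ge2$, since the $Y_j$ are i.i.d.\ with the continuous law $\nu_{\theta_0,1}$ and hence a.s.\ not all equal; this is what $N\ge2$ secures and it is what makes $f_\tau$ well defined. Second, one verifies that the normalising integral in the denominator of \eqref{eqa:PostMerton} is finite: with $\pi$ bounded the $\bar\lambda$-integral is of Gamma type, the $\bar m$-integral is Gaussian, and the remaining $\bar\delta^2$-integral is reduced to a Gamma integral by the substitution $\bar\delta^2\mapsto1/\bar\delta^2$, the factor $e^{-N\hat\delta_N^2/(2\bar\delta^2)}$ (positive since $\hat\delta_N^2>0$) controlling the region $\bar\delta^2\to0$ and the power $\bar\delta^{-N}$ the tail $\bar\delta^2\to\infty$. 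The only place that needs care is this bookkeeping: one must make sure every term left uncancelled in \eqref{eqa:PostLevy} is genuinely independent of $\theta$, and that the normalising integral is finite so that the a posteriori density indeed exists.
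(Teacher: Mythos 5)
Your proof follows the same route as the paper's: substitute the Gaussian Radon--Nikodym derivative of the L\'evy measures into the Girsanov likelihood \eqref{eqa:GrisanovCP}, regroup the $N$ jump terms via $\sum_{j}(Y_j-m)^2=N\hat\delta_N^2+N(\hat m_N-m)^2$ together with $\tau\lambda=N\lambda/\hat\lambda_N$, and observe that the $\theta_0$-dependent prefactor cancels between numerator and denominator of \eqref{eqa:PostLevy} --- this is correct (and in fact supplies the squares missing from the paper's displayed formula \eqref{eqa:RadonNikodymLevyMeasures}). Your extra normalizability check is welcome and absent from the paper, but the claim that the power $\bar\delta^{-N}$ controls the tail $\bar\delta^2\to\infty$ for a merely bounded prior is too quick: after the Gaussian $\bar m$-integration the integrand behaves like $(\bar\delta^2)^{-(N-1)/2}$ in $d\bar\delta^2$, which is integrable at infinity only for $N\geq 4$, so for small $N$ one needs $\pi$ integrable in $\bar\delta^2$ rather than just bounded.
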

\begin{proof}
Note that for the Merton model and $Y\not=0$, we have
\begin{align}
\label{eqa:RadonNikodymLevyMeasures}
\begin{split}
\log\left(\frac{d\nu_\theta}{d\nu_{\theta_0}}(Y)\right)&=\log(\lambda)-\log(\delta)-\frac{1}{2}\left(\frac{Y-m}{\delta}\right)\\
&-\log(\lambda_0)+\log(\delta_0) +\frac{1}{2}\left(\frac{Y-m_0}{\delta_0}\right).
\end{split}
\end{align}
Inserting this into the change of measure formula \eqref{eqa:GrisanovCP} with $Y_j$ in the place of $Y$, we note that exactly $N$ such terms occur in the exponent. Now \eqref{eqa:PostMerton} follows by a straight forward reordering of terms and the observations that terms depending on $\theta_0$ drop out in \eqref{eqa:PostMerton} due to normalization. 
\end{proof}

The following Definition and theorem now follow the same lines as in the BS case. Note however that despite the assumption of a continuous observation in the time interval $[-\tau,0]$, the subjective market measure and the subjective pricing measures in this case differ from the standard Merton market and pricing measures.

\begin{definition}[Subjective Merton Market and Pricing Measure]
\label{def:subjMertonMeasures}
Let $P_\theta$ and $Q_\theta$ be the measures that define the Merton market and the mean corrected Merton pricing measures, respectively. Then, given a bounded, continuous prior $\pi(\theta)$ and the continuous observations $(S_t)_{t\in[-\tau,0]}$ of the past, the subjective Merton market measure $P_\pi$ and the subjective Merton mean correction pricing measure are defined as
\begin{equation}
P_\pi(A)=\int_\Theta P_\theta(A)f_\tau(\theta|\pi)\, d\theta \mbox{ and }Q_\pi(A)=\int_\Theta Q_\theta(A)f_\tau(\theta|\pi)\, d\theta,~~A\in\F^+_T.
\end{equation} 
\end{definition} 
We note that the kernels $P_\theta(A)$ are measurable in $\theta$: in fact, due to \eqref{eqa:GrisanovCP} and Lebesgue's theorem of dominated convergence, these expressions are ven continuous in $\theta$.
\begin{theorem}
\label{theo:NoArbitrageMerton}
Let $\pi(\theta)$ and $\pi'(\theta)$, $\theta\in\Theta$ be two prior functions such that  $\pi(\theta)d\theta$ and $\pi'(\theta)d\theta$ are equivalent measures. Then the subjective mean corrected Merton pricing measure $Q_{\pi'}$ is an equivalent martingale measure to the subjective Merton market measure $P_\pi$.  
\end{theorem}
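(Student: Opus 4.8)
The plan is to follow, almost line for line, the proof of Theorem~\ref{theo:NoArbitrageBS}, with the scalar parameter $\sigma^2\in\R_+$ replaced by $\theta\in\Theta=\R_+^2\times\R$, the posterior $f_n(\,\cdot\,|\pi)$ of \eqref{eqa:postSigma} replaced by $f_\tau(\,\cdot\,|\pi)$ of Lemma~\ref{lem:PostMerton}, and the fiberwise Girsanov equivalence $P_\sigma\sim Q_\sigma$ replaced by its mean-correction analogue \eqref{eqa:MertonMartingaleMeasure}. The proof then splits into an equivalence step and a martingale step.

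For equivalence I would argue as follows. First, for each fixed $\theta$ the density $L_T$ in \eqref{eqa:MertonMartingaleMeasure} is almost surely strictly positive, so $P_\theta$ and $Q_\theta$ are equivalent. Second, by the explicit formula \eqref{eqa:PostMerton} and the hypothesis $\pi(\theta)\,d\theta\sim\pi'(\theta)\,d\theta$, the posteriors $f_\tau(\theta|\pi)\,d\theta$ and $f_\tau(\theta|\pi')\,d\theta$ are equivalent measures on $(\Theta,\mathscr{B}(\Theta))$, with Radon--Nikodym derivative equal to $\pi(\theta)/\pi'(\theta)$ up to a positive normalising constant. Now let $A\in\F_T^+$ with $P_\pi(A)=0$. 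Since $P_\pi(A)=\int_\Theta P_\theta(A)\,f_\tau(\theta|\pi)\,d\theta$ with non-negative integrand, $P_\theta(A)=0$ for $f_\tau(\theta|\pi)\,d\theta$-a.e.\ $\theta$; by $P_\theta\sim Q_\theta$ also $Q_\theta(A)=0$ for $f_\tau(\theta|\pi)\,d\theta$-a.e.\ $\theta$, hence for $f_\tau(\theta|\pi')\,d\theta$-a.e.\ $\theta$ by equivalence of the posteriors, and therefore $Q_{\pi'}(A)=\int_\Theta Q_\theta(A)\,f_\tau(\theta|\pi')\,d\theta=0$. Interchanging the roles of $(P_\pi,\pi)$ and $(Q_{\pi'},\pi')$ gives the converse implication, so $P_\pi\sim Q_{\pi'}$.

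For the martingale property I would use that, for each $\theta$, the standing integrability assumption $\mathscr{L}_{\nu_1}(\alpha)<\infty$ on $[0,1]$ makes $\hat S_t=e^{-\rho t}S_t$ a genuine $Q_\theta$-martingale on $[0,T]$, whence $\E_{Q_\theta}[\hat S_t]=\hat S_0=S_0$ for every $\theta$; integrating against $f_\tau(\theta)\,d\theta$ then gives $\E_{Q_{\pi'}}[\hat S_t]=S_0<\infty$, which is exactly the integrability needed to form conditional expectations under $Q_{\pi'}$ and to interchange $\E_{Q_{\pi'}}$ with the $\theta$-integral. For $0\le s<t\le T$ and $A\in\F_s^+$ the chain of identities
\[
\E_{Q_{\pi'}}[1_A\hat S_t]=\int_\Theta\E_{Q_\theta}[1_A\hat S_t]\,f_\tau(\theta)\,d\theta=\int_\Theta\E_{Q_\theta}[1_A\hat S_s]\,f_\tau(\theta)\,d\theta=\E_{Q_{\pi'}}[1_A\hat S_s],
\]
where the middle equality uses the $Q_\theta$-martingale property, together with the $\F_s^+$-measurability of $\hat S_s$ and of $\E_{Q_{\pi'}}[\hat S_t|\F_s^+]$, yields $\E_{Q_{\pi'}}[\hat S_t|\F_s^+]=\hat S_s$ $Q_{\pi'}$-a.s.

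Everything above is routine once the two fiberwise ingredients are in place; the point I would be most careful about, and the genuine difference from the Black--Scholes situation, is that here the $P_\theta$ are mutually singular on $\F_T^+$ (compound Poisson laws with distinct intensities or jump parameters are singular over a time interval), so there is no global dominating market measure and the argument must rely entirely on the fiberwise equivalence $P_\theta\sim Q_\theta$ and the equivalence of the two mixing measures on $\Theta$. A secondary technical point is the measurability in $\theta$ of $\E_{Q_\theta}[1_A\hat S_t]$, needed for the displayed integrals; this follows by approximating $\hat S_t$ by simple functions and using that the kernels $Q_\theta(\cdot)$ are measurable --- indeed continuous, by the remark following Definition~\ref{def:subjMertonMeasures} --- in $\theta$.
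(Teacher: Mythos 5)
Your proposal is correct and follows exactly the route the paper intends: the published proof consists of the single remark that the argument is ``completely analogous to the proof of Theorem~\ref{theo:NoArbitrageBS}'', and what you have written is precisely that analogous argument carried out in detail (null-set transfer through the fiberwise equivalence $P_\theta\sim Q_\theta$ and the equivalence of the two posteriors, then the $1_A$-testing chain of identities for the martingale property). Your added observations --- the mutual singularity of the $P_\theta$ making a global dominating measure unavailable, and the measurability in $\theta$ of $\E_{Q_\theta}[1_A\hat S_t]$ --- are sound and, if anything, more careful than the paper itself.
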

\begin{proof}
The proof is completely analogous to the proof of Theorem \ref{theo:NoArbitrageBS}.
\end{proof}

\begin{corollary}[Subjective Merton Option Prices]
Arbitrage free MC prices for the subjective Merton market model $P_\pi$ are
\begin{align}
\begin{split}
\label{eqa:SubjMertonMCprices}
V_0^{\mathrm{MC}}(C(T,K)|\pi,\tau)&=\int_\Theta V_0^{\mathrm{MC}}(C(T,K)|\theta)\, f_\tau(\theta|\pi)\,d\theta\\
V_0^{\mathrm{MC}}(P(T,K)|\pi,\tau)&=\int_\Theta V_0^{\mathrm{MC}}(P(T,K)|\theta)\, f_\tau(\theta|\pi)\,d\theta
\end{split} 
\end{align}
where $ V_0^{\mathrm{MC}}(\cdots|\theta)$ are given in \eqref{eqa:MertonMCprices} and $f_\tau(\theta|\pi)$ is given in \eqref{eqa:PostMerton} for $\theta=(\lambda,\delta^2,m)$. As usual, the dependence on $(S_t)_{t\in[\tau,0]}$ is suppressed in the notation.
\end{corollary}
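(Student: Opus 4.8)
The plan is to read off the corollary from the fundamental theorem of asset pricing together with the mixture structure of $Q_\pi$. Applying Theorem~\ref{theo:NoArbitrageMerton} with $\pi'=\pi$, the subjective mean corrected pricing measure $Q_\pi$ is an equivalent martingale measure with respect to $P_\pi$, so by \eqref{eqa:MartingaleMeasure} the arbitrage free price at $t=0$ associated with $Q_\pi$ of an $\F_T^+$-measurable payoff $H\ge 0$ is $V_0(H)=e^{-\rho T}\,\E_{Q_\pi}[H]$. It remains to evaluate this for $H=(S_T-K)^+$ and $H=(K-S_T)^+$ and to match the result with the claimed integral.

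First I would establish the disintegration
\[
\E_{Q_\pi}[H]=\int_\Theta \E_{Q_\theta}[H]\, f_\tau(\theta|\pi)\, d\theta
\]
for every non-negative $\F_T^+$-measurable $H$. For $H=1_A$ this is Definition~\ref{def:subjMertonMeasures}; by linearity it extends to non-negative simple functions, and approximating a general $H\ge 0$ from below by simple functions $H_k\uparrow H$ and applying monotone convergence three times (once on the left, once inside $\E_{Q_\theta}[\,\cdot\,]$ for each fixed $\theta$, and once for the $d\theta$-integral) gives the identity; the integrand $\theta\mapsto\E_{Q_\theta}[H]$ stays measurable as an increasing limit of finite linear combinations of the kernels $Q_\theta(A)$, which are measurable (in fact continuous) in $\theta$ by the remark after Definition~\ref{def:subjMertonMeasures}. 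For the call payoff one moreover has $\E_{Q_\theta}[(S_T-K)^+]\le\E_{Q_\theta}[S_T]=e^{\rho T}S_0<\infty$, since $\hat S_t=e^{-\rho t}S_t$ is a $Q_\theta$-martingale, so the $d\theta$-integrand is finite and bounded by $S_0$ after discounting; the put payoff is bounded by $K$, so the corresponding integrand is bounded by $e^{-\rho T}K$.

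Finally I would identify $e^{-\rho T}\E_{Q_\theta}[(S_T-K)^+]$ with $V_0^{\mathrm{MC}}(C(T,K)|\theta)$ as in \eqref{eqa:MertonMCprices}: conditioning on the number $n$ of jumps of $N_t$ on $[0,T]$, which under $Q_\theta$ is Poisson with parameter $\lambda T$, and using that given $\{N_T-N_0=n\}$ the increment $X_T-X_0$ is Gaussian with mean shift $nm$ (beyond the deterministic part fixed by the mean correction \eqref{eqa:DriftCorrectionMerton}) and variance $\sigma^2T+n\delta^2$, one recovers exactly the Poisson-weighted sum of Black--Scholes prices \eqref{eqa:BSprices} with volatility $\sqrt{\sigma^2+\tfrac{n}{T}\delta^2}$ and strike $K-nm$; the same computation applies to the put. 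Substituting this into the disintegration identity yields
\[
V_0^{\mathrm{MC}}(C(T,K)|\pi,\tau)=\int_\Theta V_0^{\mathrm{MC}}(C(T,K)|\theta)\, f_\tau(\theta|\pi)\, d\theta
\]
and its put analogue, which is the assertion. The only point requiring any care is the monotone-convergence passage for the unbounded call payoff; everything else is immediate from Theorem~\ref{theo:NoArbitrageMerton}, \eqref{eqa:MartingaleMeasure}, \eqref{eqa:MertonMCprices} and the definition of $Q_\pi$.
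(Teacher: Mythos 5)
Your argument is correct and is exactly the one the paper intends (the corollary is stated without proof, as an immediate consequence of Theorem~\ref{theo:NoArbitrageMerton}, the mixture definition of $Q_\pi$, and the formula \eqref{eqa:MertonMCprices}): you invoke the equivalent-martingale property for $\pi'=\pi$, disintegrate $\E_{Q_\pi}[H]$ over $\theta$ by monotone convergence, and identify the integrand with $V_0^{\mathrm{MC}}(\cdot|\theta)$. The extra care you take with measurability of $\theta\mapsto\E_{Q_\theta}[H]$ and with the unbounded call payoff is sound and fills in details the paper leaves implicit.
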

\subsection{The limit of long observation time}
We have seen that also in the case of high frequency observations, a considerable insecurity on the proper calibration of the Merton model prevails. In this section we consider the limit, when the market dynamics is unchanged since a time $\tau$, we posses continuous observations since that time, and we consider the limit of long observation times $\tau\to\infty$. As volatility levels change fundamentally on a time scale of a few years and only a hand full major jump events are observed during a year, it is questionable if the $\tau\to\infty$ limit is of practical importance. Nevertheless we prove the following Bayesian consistency result for the European options priced with the subjective MC Merton price formula \eqref{eqa:SubjMertonMCprices}.

\begin{theorem}[Convergence of Subjective Merton MC Option Prices]
Let $\theta_0\in\Theta$, $\theta_0=(\lambda_0,\delta_0^2,m_0)$, be the  set of parameters such that $S_t$ follows the dynamics \eqref{eqa:Merton}. $\theta_0$ however is unknown to a market participant, who prices Europen options according to \eqref{eqa:SubjMertonMCprices} with some continuous, bounded prior such that $\pi(\theta_0)>0$. 

In the limit of large observation time, $\tau\to\infty$, the subjective MC Merton prices for the European call and put options converge $P_{\theta_0}$ almost surely to the MC Merton prices with parameter set $\theta_0$, i.e.
\begin{align}
\begin{split}
\label{eqa:MertonPricesConvergence}
V_0^{\mathrm{MC}}(C(T,K)|\pi,\tau)&\longrightarrow V_0^{\mathrm{MC}}(C(T,K)|\theta_0)\\
V_0^{\mathrm{MC}}(P(T,K)|\pi,\tau)&\longrightarrow V_0^{\mathrm{MC}}(P(T,K)|\theta_0)
\end{split}
\mbox{ , as }\tau\to\infty.
\end{align}
\end{theorem}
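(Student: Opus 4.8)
The plan is to reproduce, on the three--dimensional parameter space $\Theta=\R_+\times\R_+\times\R$, the saddle point argument of Theorem~\ref{theo:BayesConsistency}. Write $\theta=(\lambda,\delta^2,m)$, $\theta_0=(\lambda_0,\delta_0^2,m_0)$, and take as observable $g(\theta)=V_0^{\mathrm{MC}}(C(T,K)|\theta)$ (respectively the put expression) from the integrand of \eqref{eqa:SubjMertonMCprices}, see \eqref{eqa:MertonMCprices}. First I would rewrite the a posteriori density \eqref{eqa:PostMerton} of Lemma~\ref{lem:PostMerton} in exponential form,
\[
f_\tau(\theta)=\frac{e^{-N h_N(\theta)}\pi(\theta)}{\int_\Theta e^{-N h_N(\xi)}\pi(\xi)\,d\xi},\qquad h_N(\theta)=\frac{\lambda}{\hat\lambda_N}-\log\lambda+\tfrac12\log\delta^2+\frac{\hat\delta_N^2+(\hat m_N-m)^2}{2\delta^2},
\]
where $N$ is the number of jumps observed in $[-\tau,0]$ and $\hat\lambda_N=N/\tau$; note that $h_N$ splits as $h_N(\theta)=p_N(\lambda)+q_N(\delta^2,m)$ with $p_N(\lambda)=\tfrac{\lambda}{\hat\lambda_N}-\log\lambda$ and $q_N(\delta^2,m)=\tfrac12\log\delta^2+\tfrac{\hat\delta_N^2+(\hat m_N-m)^2}{2\delta^2}$. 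Since under $P_{\theta_0}$ the jump times form a Poisson process of rate $\lambda_0$ and the heights $Y_j$ are i.i.d.\ $N(m_0,\delta_0^2)$, the strong law of large numbers gives $N\to\infty$, $\hat\lambda_N\to\lambda_0$, $\hat m_N\to m_0$, $\hat\delta_N^2\to\delta_0^2$ $P_{\theta_0}$--a.s.\ as $\tau\to\infty$, so that $h_N\to h$ uniformly on compact subsets of $\Theta$, $P_{\theta_0}$--a.s., where
\[
h(\theta)=\frac{\lambda}{\lambda_0}-\log\lambda+\tfrac12\log\delta^2+\frac{\delta_0^2+(m_0-m)^2}{2\delta^2}=p(\lambda)+q(\delta^2,m).
\]
Minimising each block ($p$ over $\lambda$, $q$ over $(\delta^2,m)$), and using $h(\theta)\to+\infty$ whenever $\theta$ leaves a compact subset of $\Theta$, shows $h$ has a unique global minimum at $\theta_0$ with value $\beta_0=\tfrac32-\log\lambda_0+\tfrac12\log\delta_0^2$. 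Because $N\to\infty$, it suffices to establish \eqref{eqa:MertonPricesConvergence} along the (random) sequence of sample sizes $N$, which is a saddle point problem in the sense of Appendix~\ref{app:A}; the residual dependence of $h_N$ on $\tau$ through $\hat\lambda_N$ is harmless, since Lemma~\ref{lem:BayesConsistBoundedPrior} only uses uniform convergence $h_N\to h$ on compact sets together with the lower bounds derived below.

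It then remains to check the hypotheses of Lemma~\ref{lem:BayesConsistBoundedPrior}. The observable $g$ is non--negative, bounded (by $S_0$, resp.\ $K$) and continuous on $\Theta$ by the discussion after \eqref{eqa:MertonMCprices}, and $\pi$ is continuous, bounded with $\pi(\theta_0)>0$; uniform convergence of $h_N$ on compact sets has just been recorded. The only substantial point is the domination/tail condition, which I would handle as in the derivation of \eqref{eqa:EstH}. On the full--measure event carrying the above limits put $\overline\lambda=\sup\{\hat\lambda_N:N\geq n_0\}<\infty$, $\underline{\delta}^2=\inf\{\hat\delta_N^2:N\geq n_0\}>0$, $M_1=\sup\{|\hat m_N|:N\geq n_0\}<\infty$ (finite, resp.\ positive, $P_{\theta_0}$--a.s.\ by the convergences and $\hat\delta_N^2>0$ for $N\geq2$); these are single random numbers, playing the r\^ole of $\underline{\sigma}^2$ in the proof of Theorem~\ref{theo:BayesConsistency}. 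Then $h_N(\theta)\geq\underline h(\theta)$ for all $N\geq n_0$, with the $N$--independent
\[
\underline h(\theta)=\frac{\lambda}{\overline\lambda}-\log\lambda+\tfrac12\log\delta^2+\frac{\underline{\delta}^2+\bigl(\max\{|m|-M_1,0\}\bigr)^2}{2\delta^2},
\]
and $\underline h(\theta)\to+\infty$ as $\theta$ leaves compact sets: it is driven up by $-\log\lambda$ as $\lambda\downarrow0$, by $\lambda/\overline\lambda$ as $\lambda\uparrow\infty$, by $\underline{\delta}^2/(2\delta^2)$ as $\delta^2\downarrow0$, by $\tfrac12\log\delta^2$ as $\delta^2\uparrow\infty$, and by $\inf_{\delta^2>0}\bigl[\tfrac12\log\delta^2+\tfrac{(\max\{|m|-M_1,0\})^2}{2\delta^2}\bigr]=\tfrac12\log\bigl((\max\{|m|-M_1,0\})^2\bigr)+\tfrac12$ as $|m|\to\infty$, while staying bounded below elsewhere. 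Choosing a product dominator $a(\theta)=a_1(\lambda)+a_2(\delta^2)+a_3(m)$ that grows only logarithmically in each variable yet makes $e^{-a}$ integrable on $\Theta$ with respect to $d\lambda\,d\delta^2\,dm$ (e.g.\ $a_1(\lambda)=1_{\{\lambda>1\}}2\log\lambda$, $a_2$ of the same type, $a_3(m)=\log(1+m^2)$), together with a fixed sufficiently large $n_0$, one obtains a bounded product neighbourhood $U(\theta_0)\subset\Theta$ with $\underline h(\theta)-a(\theta)/n_0\geq\beta_0+1$ on $\Theta\setminus U(\theta_0)$, since $\underline h$ grows at least as fast as $a/n_0$ (linearly in $\lambda$, like $\delta^{-2}$ as $\delta^2\downarrow0$, and faster in the logarithmic directions once $n_0$ is large enough). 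Hence $\tilde h_N(\theta)=h_N(\theta)-a(\theta)/N\geq\underline h(\theta)-a(\theta)/n_0\geq\beta_0+1$ on $\Theta\setminus U(\theta_0)$ for all $N\geq n_0$, which is the last hypothesis (with $\gamma=1$), and Lemma~\ref{lem:BayesConsistBoundedPrior} yields \eqref{eqa:MertonPricesConvergence}, the convergence being $P_{\theta_0}$--a.s.

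I expect the tail estimate above to be the only genuine obstacle. In contrast to the one--dimensional Black--Scholes case, $\Theta$ is three--dimensional and unbounded in several directions along which $h_N$ grows at very different rates --- linearly in $\lambda$, like $\delta^{-2}$ as $\delta^2\downarrow0$, only logarithmically as $\delta^2\uparrow\infty$, and (owing to the coupling through the variance $\delta^2$) also only logarithmically in $|m|$ in the region where $\delta^2$ is comparably large. Balancing these against a single jointly integrable dominator $e^{-a}$, while keeping the lower bound uniform in $N\geq n_0$ with the random quantities $\overline\lambda,\underline{\delta}^2,M_1$ replacing the true parameters, is the delicate part; the rest is a routine transcription of the proofs of Theorems~\ref{theo:NoArbitrageBS} and \ref{theo:BayesConsistency}.
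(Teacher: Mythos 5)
Your proposal is correct and follows essentially the same route as the paper: the saddle point lemma of Appendix \ref{app:A} applied with the same $h_N$, the same limit function $h$ with minimum value $\beta_0=\tfrac32-\log\lambda_0+\tfrac12\log\delta_0^2$ at $\theta_0$, and a logarithmically growing $a(\theta)$ with $e^{-a}$ integrable. The only (presentational) difference is that you package the tail condition as a single $N$-independent minorant $\underline h$ with relatively compact sublevel sets, whereas the paper verifies the bound coordinate-wise over a product neighbourhood; both rest on the same $P_{\theta_0}$-a.s.\ uniform random bounds $\bar\lambda$, $\underline{\delta}^2$ and the sup/inf of $\hat m_N$.
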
 
\begin{proof}
We write the problem in saddle point form, see Appendix \ref{app:A}. First, the functions $g(\theta)$ are given in \eqref{eqa:MertonMCprices}. As discussed, these functions are bounded and continuous. 

the function $f_\tau(\theta)$ can be rewritten as $e^{-Nh_N(\theta)}\pi(\theta)/\int_\Theta e^{-Nh_N(\bar \theta)}\pi(\bar \theta)d\theta$ with
\begin{equation}
h_N(\theta)=\left(\frac{\lambda}{\hat\lambda_N}-\log(\lambda)\right)+\frac{1}{2}\left(\frac{\hat\delta_N^2}{\delta^2}+\log(\delta^2)+\left(\frac{\hat m_N-m}{\delta}\right)^2\right),~~\theta=(\lambda,\delta^2,m)\in\Theta.
\end{equation}  
Replacing the estimated quantities $\hat\lambda_N$, $\hat \delta^2_N$ and $\hat m_N$ with $\lambda_0$, $\delta_0^2$ and $m_0$, we obtain the function $h(\theta)$. It is easily verified that $h(\theta)$ is minimal at $\theta=\theta_0$, where it attains the value $\beta_0=(1-\log(\lambda_0))+\frac{1}{2}(1+\log(\delta_0)^2)$. 

We have $N=-N_{-\tau}\to\infty$ as $\tau\to\infty$ $P_{\theta_0}$-a.s. and therefore $\hat\lambda_N\to\lambda_0$, $\hat\delta^2_N\to\delta^2_0$ and $\hat m_N\to m_0$ $P_{\theta_0}$-almost surely by the strong law of large numbers. It is thus easily checked that $h_N(\theta)\to h(\theta)$ uniformly on compact sets holds almost surely.

We next define the auxiliary function $a(\theta)$ from the assumptions of Lemma \ref{lem:BayesConsistBoundedPrior}. A possible choice is
\begin{equation}
\label{eqa:AuxiliaryA}
a(\theta)=1_{\{\lambda\geq 1\}}2\log(\lambda)+1_{\{\delta^ 2>1\}}2\log(\delta^ 2)+1_{\{|m|>1\}}2\log(m).
\end{equation}
Let $\bar \lambda=\sup_{\tau\geq \tau^*} \hat\lambda_N(\tau)<\infty$ with a stopping time $\tau^*$ sufficiently large that in $[-\tau^*,0]$ there occurs at least fife jumps.  $\tau^*< \infty$  holds $P_{\theta_0}$ almost surely. Furthermore set $  \underline{\delta}^2=\inf_{\tau\geq \tau^*}\hat \delta_N^2>0$ and $  \bar{\delta}^2=\sup_{\tau\geq \tau^*}\hat \delta_N^2<\infty$, $\underline{m}=\inf_{\tau\geq \tau^*}\hat m_N\in\R$ and finally $\bar{m}=\sup_{\tau\geq \tau^*}\hat m_N\in\R$ . All these statements have to be understood in the $P_{\theta_0}$ a.s. sense. We see with a similar argument as in the proof of Theorem \ref{theo:BayesConsistency} that the following estimate is uniform in $\tau\geq \tau_5$:
\begin{align}
\label{eqa:LowerEstimate}
\begin{split}
\tilde h_N(\theta)=h_N(\theta)-\frac{a(\theta)}{N}&\geq \left(\frac{\lambda}{\bar \lambda}-\left(1+\frac{21_{\{\lambda\geq 1\}}}{5}\right)\log(\lambda)\right)\\
&+\frac{1}{2}\left(\frac{\underline{\delta}^2}{\delta^2}+
\left(1-\frac{41_{\{\lambda\geq 1\}}}{5}\right)\log(\delta^2)\right)\\
&+\frac{1}{2}\left(\left(\frac{\mathrm{dist}(m,[\underline{m},\bar m])}{\bar \delta}\right)^2-\frac{4}{5}1_{\{|m|>1\}}\log(|m|)\right)
\end{split}
\end{align}
We chose $\gamma=1$. We now construct the bounded open environment $U(\theta_0)$ from the above  estimate. First, chose $l_\pm^\delta$ as in the proof of Theorem \ref{theo:BayesConsistency}, however with $\beta_0=h(\theta)$. Then the middle term exceeds $\beta_0+1$ if $\delta^2\in\R_+\setminus[l_-^\delta,l_+^\delta]$. 

Secondly, chose $0<l_-^\lambda<\min\{1,\lambda_0\}$, then the first therm on the right hand side of \eqref{eqa:LowerEstimate} is positive for $0<\lambda <l_-^\lambda$ as $\log(\lambda )<0$ in this case. If we chose $l_+^\lambda>\max\{1,\lambda_0\}$ sufficiently large such that $\lambda>\bar\lambda \frac{7}{5}\log(\lambda)$ holds for $\lambda>l_+^\lambda$, the first term on the right hand side is larger zero also for such $\lambda$. Thus it is bounded by zero for $\lambda\in \R_+\setminus [l_-^\lambda,l_+^\lambda]$. 

Finally choose $l^m>|m_0|$ sufficiently large such that $ \left(\frac{\mathrm{dist}(m,[\underline{m},\bar m])}{\bar \delta}\right)^2\geq \frac{4}{5} \log(|m|)$ if $|m|>l^m$. If $m\in\R\setminus [-l^m,l^m]$, then the third term in \eqref{eqa:LowerEstimate} is positive as well. Thus we can choose the environment $U(\theta_0)=(l_-^\lambda,l_+^\lambda)\times (l_-^\delta,l_+^\delta)\times (-l^m,l^m)$ and obtain that $\tilde h(\theta)\geq \beta_0+1$ for $\theta\in\Theta\setminus U(\theta_0$. As this was the last assumption of Lemma \ref{lem:BayesConsistBoundedPrior} to verify, the convergence statement \eqref{eqa:MertonPricesConvergence} now follows from Lemma \ref{lem:BayesConsistBoundedPrior}.  
\end{proof}

 \subsection{A numerical example for the Merton market}

\begin{figure}[t]
\centerline{\includegraphics[scale=.5]{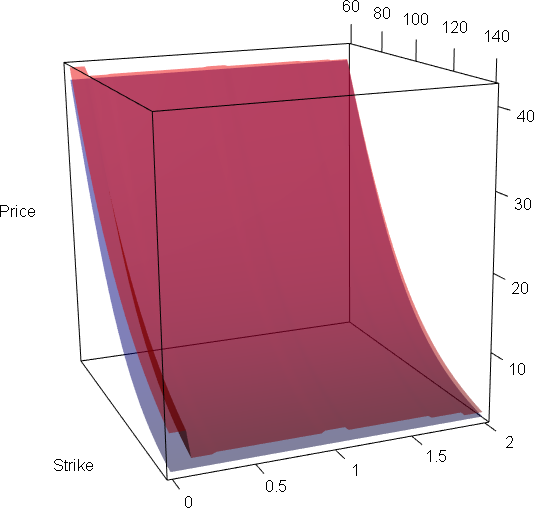}}
\caption{\label{fig:MertonSurface} Pricing surface in the Merton Model (grey) and its Bayesian extension (red) as a function of the observation time (in years). Discontinuities in time of the price surface occur at times when jumps occur in the underlying asset price and new information on the jump distribution is revealed.}
\end{figure}

Again we use a noninformative prior for the Merton Market. We leave the data for the BS-part of the market as in subsection 2.4, but we add jumps with a Gaussian distribution of jumps $N(m,\delta^2)$ with $m=0$ and $\delta=0-05$. The jump frequency ist set to $\lambda= 4$, which corresponds to four jump events per year in average.   Option prices are calculated with Merton's mean correction. Figure \ref{fig:MertonSurface} shows the results for a strike range $K\in [60,140]$ which is based on a simulated trajectory of the underlying Merton model.

\begin{figure}[t]
\centerline{\includegraphics[scale=.5]{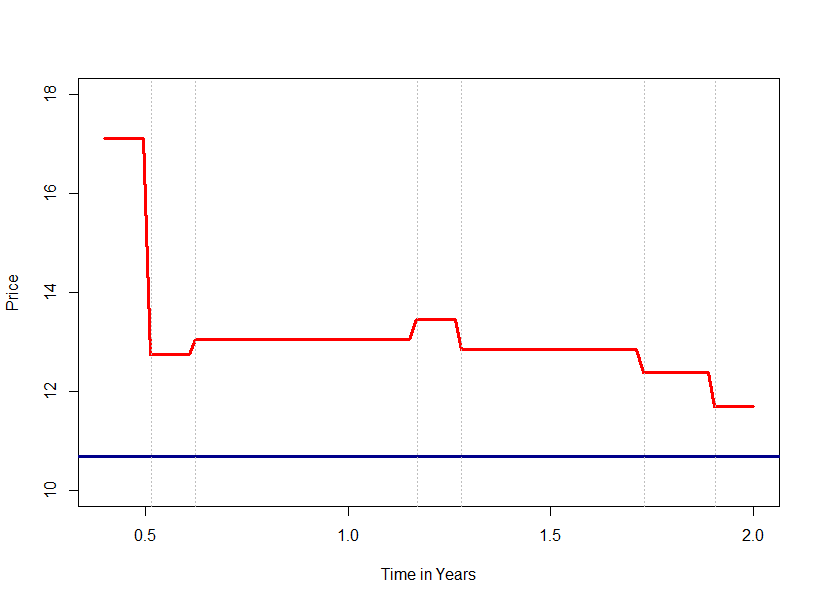}}
\caption{\label{fig:MertonConvergence} Convergence of the Bayesian Option Price in the Merton Model over observation time over 2 years for the strike $K=100$. The six vertical lines correspond to jump events of the underlying. The final difference between the Merton price remains in the range of 1 Euro or 10\% of the option price.  }
\end{figure}

Figure \ref{fig:MertonConvergence} provides a section through the pricing surface for a strike $K=100$. The found difference in the option priced between a Merton and a Bayesian Merton option price at the end of the  observation period of two years still amounts to approximately 10\% of the option's price, which of course, is a significant deviation.  

\section{Conclusion}
In the present work, we have given a systematic and mathematically rigorous account on Bayesian methods in option pricing for the case of the Black-Scholes and the Merton market. In particular, we proved the existence of equivalent martingale measures and market incompleteness for these Bayesian market models. 

 Bayesian corrections to option prices due to uncertainty in volatility estimates has been intensly studied in the context of time series models, see e.g. \cite{DS,FGMM,FMW,GHM,HLM,JP,RHBF,RS}. We have shown that this concepts  crucially depends on the observation frequency. In particular, as a consequence of Bayesian consistency, it is obsolete in the context of high frequency observations. 

In contrast, Bayesian prices differ from non Bayesian ones even in the case of continuous observations, if Market models of exponential L\'evy type are considered. The reason for this crucial difference to the BS-market lies in the fact that one can not increase the information on the jump distribution by simply increasing the observational frequency. The uncertainty on the jump distribution thus prevails over a span of time of several years and has a significant impact to option prices. Despite also this difference converges to zero in the observation time is sent to infinity, this mathematical result is not very relevant as the statistical properties of asset markets are certainly non stationary on a time scale of several years. 

Despite Bayesian methods in asset pricing have been predominantly applied to volatility estimation in markets with Gaussian log-returns, we here suggest that Bayesian estimation and option pricing in exponential L\'evy markets is an even more interesting application of the Bayesian approach in finance.

\appendix

\section{The Saddle Point Method for Bayesian Consistency}
\label{app:A}
Here we give a variant of the saddle point method that is tailored for the  Bayesian consistency for option prices, see \cite{CR,Gho} for comprehensive reviews on Bayesian consistency.

Let $\Theta\subseteq \R^d$ be some open region and let $\pi(\theta)$ be some continuous, bounded and non negative function on $\Theta$ such that $\int_\Theta\pi(\theta)d\theta<\infty$. Let $h,h_n:\Theta\to\R$ be continuous functions that are bounded from below such that $h_n(\theta)\to h(\theta)$ uniformly on compact sets. Furthermore, $h(\theta)$ assumes a unique global minimum, $\theta_0\in\Theta$ where it attains the value $\beta_0=h(\theta_0)$. Also, there exists a positive number $\gamma>0$ and an open, bounded neighbourhood $U(\theta_0)$, $\bar U(\theta_0)\subseteq \Theta$, and  $n_0\in\N$ such that for all $\theta\in \Theta\setminus U(\theta_0)$ we have $h_n(\theta)>\beta_0+\gamma$ $\forall n\geq n_0$.  Finally we assume that $\pi(\theta_0)>0$.  

\begin{lemma}[Saddle Point Method with Integrable Prior]
\label{lem:SaddlePointIntegrable}
Let $h,h_n$ be as described above and let $g:\Theta\to\R$ be a bounded and continuous function. Then,
\begin{equation}
\label{eqa:SaddlePoint}
\frac{\int_\Theta e^{-n h_n(\theta)} g(\theta)\pi(\theta)\,d\theta}{\int_\Theta e^{-n h_n(\theta)} \pi(\theta)\,d\theta}\longrightarrow g(\theta_0),\mbox{ if }n\to\infty.
\end{equation}
\end{lemma}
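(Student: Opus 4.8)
The plan is to localize the integrals to the neighbourhood $U(\theta_0)$ and then, on that small set, replace $h_n$ by its limit $h$ and exploit the uniqueness of the minimum. First I would establish a lower bound for the denominator. Since $\pi$ is continuous with $\pi(\theta_0)>0$ and $h_n\to h$ uniformly on the compact set $\bar U(\theta_0)$, for $n$ large there is a small ball $B\subseteq U(\theta_0)$ around $\theta_0$ on which $\pi(\theta)\geq \tfrac12\pi(\theta_0)$ and $h_n(\theta)\leq \beta_0+\varepsilon$; hence
\begin{equation}
\int_\Theta e^{-n h_n(\theta)}\pi(\theta)\,d\theta\;\geq\; \tfrac12\pi(\theta_0)\,|B|\,e^{-n(\beta_0+\varepsilon)}.
\end{equation}
Next I would bound the contribution to both integrals coming from $\Theta\setminus U(\theta_0)$. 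On that set $h_n(\theta)>\beta_0+\gamma$ for $n\geq n_0$, so $e^{-n h_n(\theta)}\leq e^{-n(\beta_0+\gamma)}$, and since $g$ is bounded and $\pi$ is integrable,
\begin{equation}
\left|\int_{\Theta\setminus U(\theta_0)} e^{-n h_n(\theta)} g(\theta)\pi(\theta)\,d\theta\right|\;\leq\; \|g\|_\infty\,\|\pi\|_{L^1}\,e^{-n(\beta_0+\gamma)}.
\end{equation}
Dividing by the denominator lower bound, this outer contribution is $O(e^{-n(\gamma-\varepsilon)})\to0$ once $\varepsilon<\gamma$; the same estimate with $g\equiv 1$ shows the outer part of the denominator is negligible too. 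So it suffices to prove the claim with all integrals restricted to $U(\theta_0)$.

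On $U(\theta_0)$ I would argue as follows. Fix $\eta>0$ and use continuity of $g$ at $\theta_0$ to pick a ball $B_\eta\subseteq U(\theta_0)$ on which $|g(\theta)-g(\theta_0)|<\eta$. Then
\begin{equation}
\left|\frac{\int_{U(\theta_0)} e^{-n h_n} g\,\pi\,d\theta}{\int_{U(\theta_0)} e^{-n h_n}\pi\,d\theta}-g(\theta_0)\right|
\;\leq\; \eta\;+\;2\|g\|_\infty\,\frac{\int_{U(\theta_0)\setminus B_\eta} e^{-n h_n}\pi\,d\theta}{\int_{U(\theta_0)} e^{-n h_n}\pi\,d\theta}.
\end{equation}
The compact set $\bar U(\theta_0)\setminus B_\eta$ does not contain $\theta_0$, and by uniform convergence on compacts together with the uniqueness of the global minimum of $h$, we have $\inf_{\bar U(\theta_0)\setminus B_\eta} h > \beta_0$; hence for $n$ large $h_n\geq \beta_0+\delta$ on that set for some $\delta>0$, giving the numerator-like quantity $\leq \|\pi\|_{L^1} e^{-n(\beta_0+\delta)}$. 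Combined with the denominator lower bound $\tfrac12\pi(\theta_0)|B|e^{-n(\beta_0+\varepsilon)}$ with $\varepsilon<\delta$, the second term tends to $0$, leaving $\limsup_n |\cdots| \leq \eta$; since $\eta$ is arbitrary, the limit is $g(\theta_0)$.

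The main obstacle, and the place that needs care rather than just routine estimation, is extracting the strict gap $\inf_{\bar U(\theta_0)\setminus B_\eta} h_n \geq \beta_0+\delta$ uniformly for large $n$: this requires combining the hypothesis $h_n>\beta_0+\gamma$ outside $U(\theta_0)$ with a separate compactness argument inside $\bar U(\theta_0)$, where one uses that $h$ has a \emph{strict unique} minimum so that $\inf_{\bar U(\theta_0)\setminus B_\eta} h =: \beta_0+2\delta>\beta_0$, and then uniform convergence $h_n\to h$ on the compact $\bar U(\theta_0)$ to pass the gap down to $h_n$ for $n$ large. Everything else is bookkeeping with the exponential bounds above.
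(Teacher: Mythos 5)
Your proposal is correct and follows essentially the same strategy as the paper's proof: a lower bound on the denominator from a small ball around $\theta_0$ where $h_n$ is close to $\beta_0$ and $\pi$ is bounded below, an exponential upper bound on the outer contributions obtained by combining the hypothesis $h_n>\beta_0+\gamma$ off $U(\theta_0)$ with compactness, uniqueness of the minimum and uniform convergence on $\bar U(\theta_0)\setminus B_\eta$, and finally continuity of $g$ at $\theta_0$. The only (immaterial) difference is that the paper sandwiches the ratio between $\inf_{B_\varepsilon}g$ and $\sup_{B_\varepsilon}g$ and lets $\varepsilon\to 0$, whereas you use the bound $|g-g(\theta_0)|<\eta$ directly.
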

\begin{proof}
We start the proof with the following bound on the convergence speed of the nominator to zero: Let $\beta > \beta_0$, and let  $\bar{\gamma}=\frac{\beta-\beta_0}{2}>0$. As $h_n\to h_n$ uniformly on compact sets, there exists a $\varepsilon >0$ and a number $n_1\in\N$ such that $|h_n(\theta)-h(\theta)|<\bar{\gamma}/2$ for all $\theta\in \bar{B}_\varepsilon(\theta_0)$ and $n\geq n_1$. Here $B_\varepsilon(\theta_0)$ is the ball around $\theta_0$ with radius $\varepsilon$. Thus, $h_n(\theta)-\beta < - \frac{\bar \gamma}{2}$ on $\bar{B}_\varepsilon(\theta_0)$. We now get
\begin{equation}
\label{eqa:BoundNomEstim}
e^{n\beta}\int_\Theta e^{-nh_n(\theta)}\pi(\theta) \, d\theta\geq e^{n\frac{\bar \gamma}{2}}\int_{ B_\varepsilon(\theta_0)}\pi(\theta)d\theta\longrightarrow \infty \mbox{ as }n\to\infty.
\end{equation}
Here we also used that $\pi(\theta_0)>0$ and thus the integral of $\pi(\theta)$ over  $ B_\varepsilon(\theta_0)$ is positive.

Let next be $\varepsilon >0$ arbitrary. Let $\bar \gamma=\inf_{\theta\in \Theta\setminus  B_\varepsilon(\theta)}[h(\theta)-\beta_0]>0$ as the minimum is unique and $h(\theta)>\beta_0+\gamma$ on $\theta\in \Theta\setminus U(\theta_0)$ with $\gamma$ and $U(\theta_0)$ as in the assumptions. If $n>\max\{n_0,n_1\}$, with $n_0$ from the assumptions and $n_1$ sufficiently large such that $|h_n(\theta)-h(\theta)|\leq \min(\gamma,\bar\gamma)/2$ on the compact set $\bar U(\theta_0)\setminus B_\varepsilon(\theta_0)$, we see that for such $n$ $h_n(\theta)>\beta=\beta_0+ \min(\gamma,\bar\gamma)/2$ on $\Theta\setminus  B_\varepsilon(\theta_0)$. Consequently,
\begin{equation}
\label{eqa:EstNominator}
\int_{\Theta\setminus B_\varepsilon(\theta_0)} e^{-nh_n(\theta)}g(\theta)\pi(\theta) \, d\theta \leq  \left(\sup_{\theta\in\Theta}|g(\theta)|\int_\Theta\pi(\theta)\,d\theta\right)e^{-\beta n}.
\end{equation}
Combining this with \eqref{eqa:BoundNomEstim}, we obtain that
\begin{equation}
\label{eqa:EstFrac}
\frac{\int_{\Theta\setminus B_\varepsilon(\theta_0)} e^{-nh_n(\theta)}g(\theta)\pi(\theta) \, d\theta}{\int_\Theta e^{-nh_n(\theta)}\pi(\theta) \, d\theta}\longrightarrow 0 \mbox{ as }n\to\infty.
\end{equation}
Therefore, applying \eqref{eqa:EstFrac} once for $g(\theta)$ itself and one for $g(\theta)$ replaced with one, we get from the fact that adding sequences converging to zero do not change the $\limsup$ or the $\liminf$ 
\begin{align}
\label{eqa:EstLimInf}
\begin{split}
&\liminf_n\frac{\int_{\Theta} e^{-nh_n(\theta)}g(\theta)\pi(\theta) \, d\theta}{\int_\Theta e^{-nh_n(\theta)}\pi(\theta) \, d\theta}\\
&=\liminf_n\frac{\int_{B_\varepsilon(\theta_0)} e^{-nh_n(\theta)}g(\theta)\pi(\theta) \, d\theta}{\int_\Theta e^{-nh_n(\theta)}\pi(\theta) \, d\theta}\\
&\geq \inf_{\theta\in B_\varepsilon(\theta_0)}g(\theta) \liminf_n\frac{\int_{B_\varepsilon(\theta_0)} e^{-nh_n(\theta)}\pi(\theta) \, d\theta}{\int_\Theta e^{-nh_n(\theta)}\pi(\theta) \, d\theta}\\
&=\inf_{\theta\in B_\varepsilon(\theta_0)}g(\theta)\liminf_n\frac{\int_{\Theta} e^{-nh_n(\theta)}\pi(\theta) \, d\theta}{\int_\Theta e^{-nh_n(\theta)}\pi(\theta) \, d\theta}=\inf_{\theta\in B_\varepsilon(\theta_0)}g(\theta).
\end{split}
\end{align}
Likewise, we prove 
\begin{equation}
\label{eqa:EstLimSup}
\limsup_n\frac{\int_{\Theta} e^{-nh_n(\theta)}g(\theta)\pi(\theta) \, d\theta}{\int_\Theta e^{-nh_n(\theta)}\pi(\theta) \, d\theta}\leq \sup_{\theta\in B_\varepsilon(\theta_0)}g(\theta).
\end{equation} 
As these inequalities \eqref{eqa:EstLimInf} and \eqref{eqa:EstLimSup} hold for arbitrary $\varepsilon>0$, we can take the supremum over $\varepsilon>0$ in \eqref{eqa:EstLimInf} and the infimum over $\varepsilon>0$ in \eqref{eqa:EstLimSup}. By continuity of $g(\theta)$, we obtain
$g(\theta_0)$ as upper bound for the $\limsup$ in \eqref{eqa:EstLimSup} and as lower bound for the $\liminf$ in \eqref{eqa:EstLimInf} from which the convergence \eqref{eqa:SaddlePoint}  follows.
\end{proof}

The following Lemma deals with some modification of the previous, in order to deal with the case of a non integrable prior, like e.g. the non informative prior:

\begin{lemma}[Saddle Point Method with Non Integrable Prior]
\label{lem:BayesConsistBoundedPrior}
Consider the situation in the beginning of the appendix, where however the prior $\pi(\theta)$ is  continuous and bounded, but not necessarily integrable. Let $a(\theta)$ be a continuous function such that $\int_\Theta e^{-a(\theta)}d\theta<\infty$. 

Suppose that, in case we replace the functions $h_n(\theta)$ with the functions $\tilde h_n(\theta)=h_n(\theta)-a(\theta)/n$,  these modified functions still fulfil the following condition: There exists a positive number $\gamma>0$, an open environment $U(\theta_0)$ of $\theta_0$ and a number $n_0\in\N$ such that for all $\theta\in \Theta\setminus U(\theta_0)$ we have $\tilde h_n(\theta)>\beta_0+\gamma$ $\forall n\geq n_0$. Then, \eqref{eqa:SaddlePoint} still holds. 
\end{lemma}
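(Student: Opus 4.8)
The plan is to reduce the statement directly to Lemma \ref{lem:SaddlePointIntegrable} by absorbing the factor $e^{-a(\theta)}$ into the prior. I would set $\tilde\pi(\theta)=e^{-a(\theta)}\pi(\theta)$ and record the exact identity $e^{-n h_n(\theta)}\pi(\theta)=e^{-n\tilde h_n(\theta)}\tilde\pi(\theta)$, which holds because $n\tilde h_n(\theta)=n h_n(\theta)-a(\theta)$. Substituting this into both the numerator and the denominator, the ratio on the left of \eqref{eqa:SaddlePoint} is literally the same object with the pair $(h_n,\pi)$ replaced by $(\tilde h_n,\tilde\pi)$. Hence it suffices to check that $(\tilde h_n,h,g,\tilde\pi)$ satisfies all the standing hypotheses collected at the beginning of the appendix, and then to quote Lemma \ref{lem:SaddlePointIntegrable}.

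Most of these checks are immediate. The functions $h$ and $g$ are untouched, so $h$ still attains its unique global minimum $\beta_0$ at $\theta_0$ and $g$ is bounded and continuous. The replacement prior $\tilde\pi$ is continuous and non-negative as a product of such functions, it is integrable since $\tilde\pi\le(\sup_\Theta\pi)\,e^{-a}$ and $\int_\Theta e^{-a(\theta)}\,d\theta<\infty$ by assumption, and $\tilde\pi(\theta_0)=e^{-a(\theta_0)}\pi(\theta_0)>0$ because $\pi(\theta_0)>0$; note that continuity, integrability and strict positivity at $\theta_0$ are exactly the properties of the prior that the proof of Lemma \ref{lem:SaddlePointIntegrable} actually invokes, so its boundedness hypothesis causes no trouble here. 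Since $a$ is continuous it is bounded on compact subsets of $\Theta$, so $a(\theta)/n\to0$ uniformly on compacts, and together with $h_n\to h$ uniformly on compact sets this yields $\tilde h_n\to h$ uniformly on compact sets. The separation estimate $\tilde h_n(\theta)>\beta_0+\gamma$ on $\Theta\setminus U(\theta_0)$ for $n\ge n_0$ is precisely the hypothesis granted in the statement, with $U(\theta_0)$ bounded and $\bar U(\theta_0)\subseteq\Theta$ as in the appendix's standing setup.

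The one genuine point --- and the step I expect to be the main obstacle, mild as it is --- is verifying that $\tilde h_n$ is bounded from below, which is not automatic, since subtracting $a(\theta)/n$ could a priori destroy it when $a$ is unbounded above. I would establish it only for all sufficiently large $n$, which is all an $n\to\infty$ conclusion requires: on $\Theta\setminus U(\theta_0)$ one has $\tilde h_n>\beta_0+\gamma$ by the granted hypothesis, while on the compact set $\bar U(\theta_0)$, once $n$ is large enough that $|\tilde h_n-h|\le1$ there (possible by the uniform convergence just noted), $\tilde h_n\ge\min_{\bar U(\theta_0)}h-1>-\infty$. With every hypothesis of Lemma \ref{lem:SaddlePointIntegrable} now in force for $(\tilde h_n,h,g,\tilde\pi)$, its conclusion is exactly \eqref{eqa:SaddlePoint}, and the proof is complete.
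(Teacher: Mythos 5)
Your proof is correct and takes essentially the same route as the paper's: absorb $e^{-a(\theta)}$ into the prior, rewrite the ratio exactly in terms of $(\tilde h_n, e^{-a}\pi)$, check the hypotheses, and invoke Lemma \ref{lem:SaddlePointIntegrable}. Your additional verification that $\tilde h_n$ stays bounded below for large $n$ addresses a point the paper's proof passes over in silence, and your handling of it is sound.
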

\begin{proof}
Note that on the left hand side of \eqref{eqa:SaddlePoint} we can replace $\pi(\theta)$ with $e^{-a(\theta)}\pi(\theta)$ and $h_n(\theta)$ with $\tilde h_n(\theta)=h_n(\theta)-a(\theta)/n$ without changing the value of the integral. Obviously, $\tilde h_n(\theta)$ also converges to $h(\theta)$ uniformly on compact sets, as the continuous function $a(\theta)$ is bounded on compact sets and thus $a(\theta)/n\to 0$ uniformly on compact sets. We can thus apply Lemma \ref{lem:SaddlePointIntegrable} to conclude.    
\end{proof}
\vspace{.5cm}

\small

\end{document}